\declaretheorem{theorem}
\newtheorem{lemma}[theorem]{Lemma}
\newtheorem{corollary}[theorem]{Corollary}
\newtheorem{definition}[theorem]{Definition}
\newcommand{\ignore}[1]{}%
\newcommand{\ProblemFormat}[1]{{\sc #1}}
\newcommand{\ProblemName}[1]{\ProblemFormat{#1}\xspace}
\newcounter{rulecnt}
\newcommand{\defproblempar}[4]{
	\vspace{1mm plus 2mm minus 1mm}
	\noindent\fbox{\begin{minipage}{0.97\textwidth}
		#1\\
		\textbf{Input:} #2\\
		\textbf{Parameter:} #3\\
		\textbf{Question:} #4
  \end{minipage}}
  \vspace{1mm plus 2mm minus 1mm}
}
\newcommand{\Ff}{{\ensuremath{\mathcal{F}}}}
\newcommand{\Oh}{\ensuremath{\mathcal{O}}}
\begin{document}
\title{On Directed Feedback Vertex Set parameterized by treewidth\footnote{Work supported by the National Science Centre of Poland, grants number 2013/11/D/ST6/03073 (MP, MW). The work of {\L}. Kowalik is a part of the project TOTAL that has received funding from the European Research Council (ERC) under the European Union’s Horizon 2020 research and innovation programme (grant agreement No 677651). This research is a part of projects that have received funding from the European Research Council (ERC) under the European Union's Horizon 2020 research and innovation programme under grant agreements No 714704 (AS). MP and MW are supported by the Foundation for Polish Science (FNP) via the START stipend programme.}} 


\author{Marthe Bonamy\thanks{CNRS, LaBRI, France, \texttt{marthe.bonamy@u-bordeaux.fr}} \and \L ukasz Kowalik\thanks{University of Warsaw, Poland, \texttt{\{kowalik,michal.pilipczuk,as277575,m.wrochna\}@mimuw.edu.pl}} \and Jesper Nederlof\thanks{Eindhoven University of Technology, Netherlands,
  \texttt{j.nederlof@tue.nl}} \and Micha\l \ Pilipczuk\footnotemark[3] \and Arkadiusz Soca\l a\footnotemark[3] \and Marcin Wrochna\footnotemark[3]}


\date{}

\begin{titlepage}
\maketitle
\begin{textblock}{20}(0, 12.5)
	\includegraphics[width=40px]{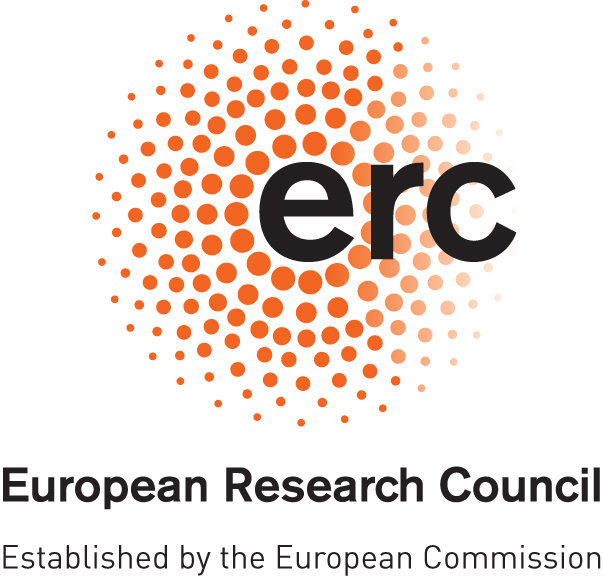}%
\end{textblock}
\begin{textblock}{20}(-0.25, 12.9)
	\includegraphics[width=60px]{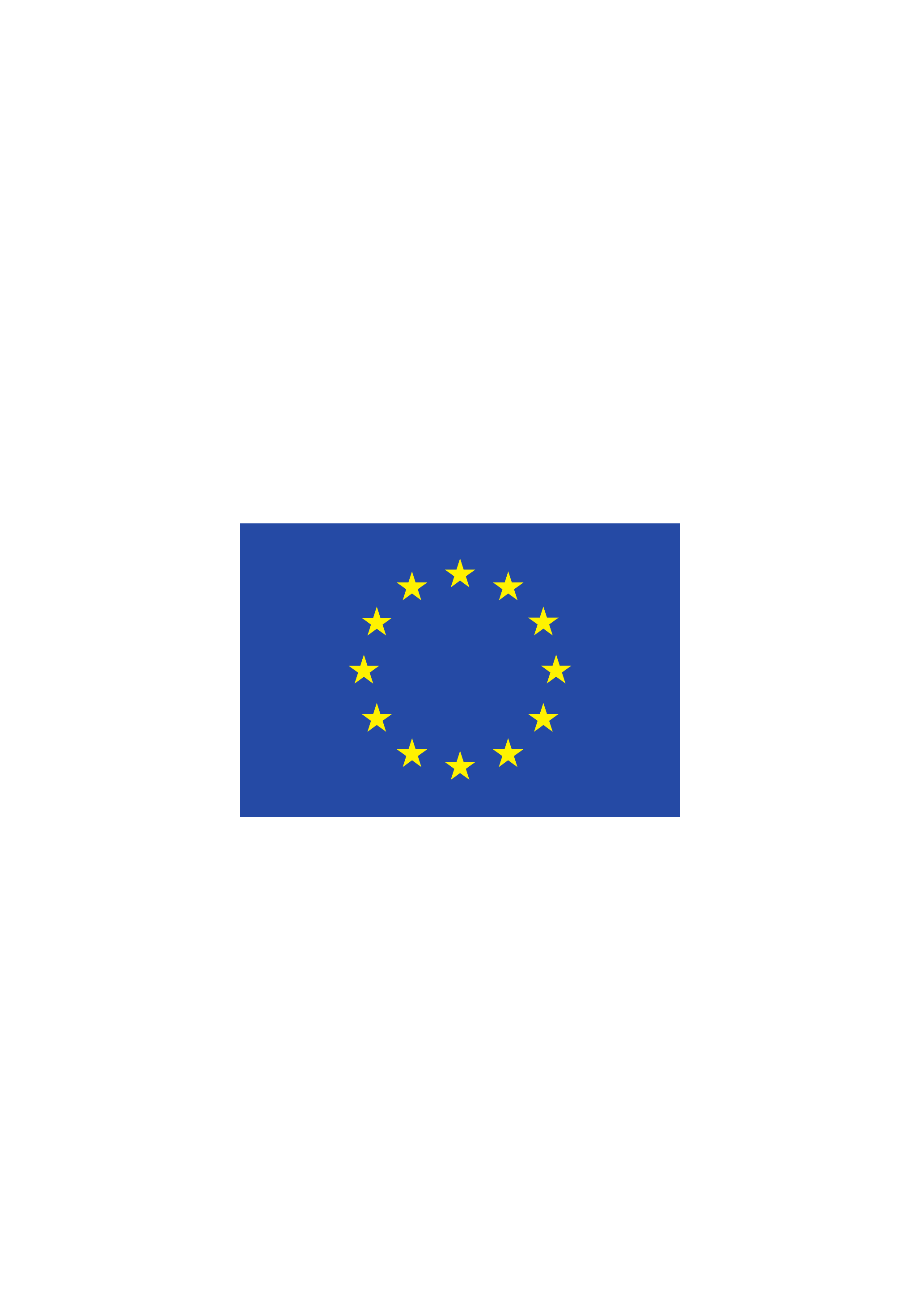}%
\end{textblock}

\begin{abstract}
We study the {\sc{Directed Feedback Vertex Set}} problem parameterized by the treewidth of the input graph.
We prove that unless the Exponential Time Hypothesis fails, the problem cannot be solved in time $2^{o(t\log t)}\cdot n^{\Oh(1)}$ on general directed graphs, where $t$ is the treewidth of the underlying
undirected graph. This is matched by a dynamic programming algorithm with running time $2^{\Oh(t\log t)}\cdot n^{\Oh(1)}$.
On the other hand, we show that if the input digraph is planar, then the running time can be improved to $2^{\Oh(t)}\cdot n^{\Oh(1)}$.
\end{abstract}

\end{titlepage}

\section{Introduction}\label{sect:intro}
In the \ProblemName{Directed Feedback Vertex Set} (\ProblemName{DFVS}) problem we are given a digraph $G$ and the goal is to find a smallest {\em{directed feedback vertex set}} in it, that is,
a subset $X$ of vertices such that $G-X$ is acyclic. The arc-deletion version, \ProblemName{Directed Feedback Arc Set} (\ProblemName{DFAS}), 
differs in that the deletion set $X$ has to consist of edges of $G$ instead of vertices.
The parameterized variants of these problems, where we ask about the existence of a solution of size at most $k$ for a given parameter $k$, are arguably among central problems in the field of parameterized algorithms.
Unfortunately, we are still far from a complete understanding of their complexity.

Establishing the fixed-parameter tractability of \ProblemName{DFVS} was once a major open problem.
It has been resolved by Chen et al.~\cite{ChenLLOR08}, who gave an algorithm for both \ProblemName{DFVS} and \ProblemName{DFAS}\footnote{In general digraphs, \ProblemName{DFVS} and \ProblemName{DFAS}
are well-known to be reducible to each other; see~\cite[Proposition 8.42 and Exercise 8.16]{ParametrizedAlgorithmsBook}. 
These reductions, however, do not preserve planarity of the digraph in question.}
with running time $2^{\Oh(k\log k)}\cdot n^{\Oh(1)}$, obtained by combining iterative compression with a smart application of important separators.
Very recently, Lokshtanov et al.~\cite{LokshtanovRS16} revisited the algorithm of Chen et al.~\cite{ChenLLOR08} and improved the running time to $2^{\Oh(k\log k)}\cdot (n+m)$;
that is, the dependence on the size of the graph is reduced to linear, but the dependence on the parameter $k$ is unchanged.
Whether the running time can be improved to $2^{\Oh(k)}\cdot n^{\Oh(1)}$, or even to $2^{o(k\log k)}\cdot n^{\Oh(1)}$, remains a challenging open problem~\cite{LokshtanovRS16}.
We remark that the question of whether \ProblemName{DFVS} admits a polynomial kernel on general digraphs remains one of the central open problems in the field of kernelization.

A possible reason for why so little progress has been observed on such an important problem, is that the analysis of cut problems in directed graphs is far more complicated than in undirected graphs, and fewer 
basic techniques are available.
For instance, consider the undirected counterpart of the problem, {\sc{Feedback Vertex Set}}, where the goal is to delete at most $k$ vertices from a given undirected graph in order to obtain a forest.
While forests have a very simple combinatorial structure that can be exploited in many ways, acyclic digraphs form a much richer class that cannot be so easily captured.
In particular, undirected graphs admitting a feedback vertex set of size $k$ have treewidth at most $k+1$, and this tree-likeness of positive instances of undirected {\sc{FVS}} makes the problem
amenable to a variety of techniques related to treewidth; other basic techniques like branching and kernelization are also applicable. 
Acyclic digraphs may have arbitrarily large treewidth, whereas directed analogues of treewidth offer almost no algorithmic tools useful for the
design of FPT algorithms.
Therefore, for the study of \ProblemName{DFVS} and other directed cut problems in the parameterized setting, 
we are so far left with important separators and a handful of other more involved techniques; cf.~\cite{ChitnisCHM15,ChitnisHM13,KratschPPW15,KratschW12,PilipczukW16}.

In planar digraphs, the complexity of \ProblemName{DFAS} changes completely. 
As shown by Lucchesi and Younger~\cite{Lucchesi78aminimax}, it is actually polynomial-time solvable (see also a different presentation by Lov\'asz~\cite{lovasz-minimax}).
More precisely, this is a consequence of the proof of the Lucchesi--Younger theorem~\cite{Lucchesi78aminimax}, which states that in planar digraphs, 
the minimum size of a directed feedback arc set is equal to the maximum 
size of a packing of arc-disjoint cycles. The proof is constructive and can be turned into a polynomial-time algorithm that computes a minimum directed feedback arc set together with a maximum cycle packing;
see~\cite{SchrijverBook} for details.

On the other hand, it is easy to see that \ProblemName{DFVS} remains NP-hard on planar digraphs, as there is a simple reduction from {\sc{Vertex Cover}} on planar graphs 
to {\sc{Directed Feedback Vertex Set}} on planar digraphs: just pick an arbitrary ordering of vertices, orient all edges from left to right (giving an acyclic orientation), and replace every edge $uv$ with a directed triangle on $u$, $v$, and a fresh vertex. 
To the best of our knowledge, no algorithm for \ProblemName{DFVS} with running time $2^{o(k\log k)}\cdot n^{\Oh(1)}$ is known even for planar digraphs, which means that so far
we are not able to exploit the planarity constraint in any useful way.

\subparagraph*{Our contribution.} The goal of this paper is to improve our understanding of \ProblemName{DFVS} by studying the parameterization by the treewidth of 
the input directed graph\footnote{Throughout this paper, the treewidth of a directed graph is defined as the treewidth of its underlying undirected graph.}, with a particular focus
 on the planar setting. We first show that a semi-standard dynamic programming approach yields an algorithm with running time $2^{\Oh(t\log t)}\cdot n^{\Oh(1)}$.

\begin{theorem}\label{thm:simple-dp}
There is an algorithm that given a digraph $G$ of treewidth $t$ on $n$ vertices, runs in time $2^{\Oh(t\log t)}\cdot n^{\Oh(1)}$
and determines the minimum size of a directed feedback vertex set and of a directed feedback arc set in $G$.
\end{theorem}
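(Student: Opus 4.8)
The plan is to solve the problem by a dynamic programming routine over a tree decomposition of $G$, following the standard recipe. First I would compute, using a constant-factor treewidth approximation algorithm (e.g.\ that of Bodlaender et al.) in time $2^{\Oh(t)}n$, a tree decomposition of $G$ of width $\Oh(t)$, and then turn it in polynomial time into a \emph{nice} tree decomposition that additionally has \emph{introduce edge} nodes, so that it has $\Oh(tn)$ nodes and every arc of $G$ is introduced exactly once, in a node containing both of its endpoints. For a node $x$ with bag $B_x$, write $G_x$ for the subgraph consisting of the vertices and arcs introduced in the subtree rooted at $x$.

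For \ProblemName{DFVS} the guiding observation is that a digraph is acyclic iff it admits a topological order, i.e.\ a linear order of its vertices in which every arc goes forward; hence $X$ is a directed feedback vertex set of $G$ iff $V(G)\setminus X$ admits a linear order that is topological for $G-X$. Accordingly, I index the DP table at node $x$ by pairs $(D,\prec)$, where $D\subseteq B_x$ is the set of bag vertices guessed into the solution and $\prec$ is a linear order of $B_x\setminus D$, and let $c_x[D,\prec]$ be the minimum size of a set $X\subseteq V(G_x)$ with $X\cap B_x=D$ such that $G_x-X$ has a topological order whose restriction to $B_x\setminus D$ equals $\prec$. There are at most $2^{|B_x|}\cdot |B_x|!=2^{\Oh(t\log t)}$ such indices, which is the source of the claimed running time. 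The transitions are mechanical: a leaf starts with the empty order and value $0$; an introduce-vertex node for $v$ either puts $v$ into $D$ at cost $+1$ or inserts $v$ at an arbitrary position of $\prec$ at no cost (legal because $v$ is isolated in $G_x$ at this point); an introduce-edge node for $(u,v)$ keeps the value if $\{u,v\}\cap D\neq\emptyset$ or if $u\prec v$, and is infeasible otherwise; a forget-vertex node for $v$ takes the minimum of $c_y[D\cup\{v\},\prec]$ and $\min_{\prec^+}c_y[D,\prec^+]$ over all orders $\prec^+$ obtained by inserting $v$ into $\prec$; a join node sets $c_x[D,\prec]=c_y[D,\prec]+c_z[D,\prec]-|D|$. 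The answer is $c_r[\emptyset,\varnothing]$ at the root $r$. Each table has $2^{\Oh(t\log t)}$ entries and each transition costs time polynomial in the table size, so over all $\Oh(tn)$ nodes the total time is $2^{\Oh(t\log t)}\cdot n^{\Oh(1)}$.

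For \ProblemName{DFAS} I would run the analogous DP in which vertices are never deleted: the table at $x$ is indexed by a linear order $\prec$ of all of $B_x$, and $c_x[\prec]$ is the minimum number of arcs of $G_x$ that point backward, over all linear orders of $V(G_x)$ that restrict to $\prec$ on $B_x$. Leaves have value $0$; introduce-vertex inserts $v$ freely; introduce-edge $(u,v)$ adds $1$ when $v\prec u$; forget takes a minimum over insertions of $v$; and join simply sums the two children's values, with no double counting since each arc is introduced exactly once. There are $|B_x|!=2^{\Oh(t\log t)}$ indices per node, giving the same bound. (Alternatively one can reduce \ProblemName{DFAS} to \ProblemName{DFVS} by subdividing every arc, which changes the treewidth by at most an additive constant, and invoke the first algorithm; I would probably still present the direct DP for self-containedness.)

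The only genuinely non-mechanical point — and the step I would spell out carefully — is the correctness of the forget and join transitions, which both rest on a single gluing lemma: since $B_x$ separates the forgotten part $V(G_x)\setminus B_x$ from the rest $V(G)\setminus V(G_x)$, there is no arc between these two sets, and therefore, given a topological order $\tau_1$ of one side together with $B_x$ and a topological order $\tau_2$ of the other side together with $B_x$ that induce the same order on $B_x$, one can merge them into a single topological order of the union by placing, within each gap between consecutive $B_x$-vertices, all vertices of the first side (ordered by $\tau_1$) before all vertices of the second side (ordered by $\tau_2$); arcs inside either side stay forward and there are no arcs across. For the forget node one additionally uses that all arcs incident to the forgotten vertex $v$ already lie in $G_x$ (each was introduced in a bag containing $v$, and $v$ never reappears above $x$), so that discarding $v$'s position loses no information needed for feasibility; the converse inequalities (every global solution restricts to partial solutions at both children) are immediate. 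Everything else is bookkeeping. I expect this separator/gluing argument, together with carefully matching up indices across introduce/forget nodes, to be the main obstacle, while the state-space count $2^{|B_x|}|B_x|! = 2^{\Oh(t\log t)}$ and the running-time analysis are routine.
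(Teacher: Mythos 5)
Your proposal is correct and follows essentially the same approach as the paper: index the DP table by a subset of the bag to be deleted and a linear order of the rest, storing the minimum deletion cost compatible with extending that order to a topological order of the subgraph below. The only differences are cosmetic bookkeeping choices --- you use introduce-edge nodes so that the edge compatibility check is deferred and the introduce-vertex step is trivial, and you count bag deletions at introduce (hence the $-|D|$ correction at joins), whereas the paper checks in-/out-neighbors at the introduce-vertex step, stores only the count of deletions strictly below the bag, and pays the $+1$ at forget with no correction at joins.
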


For the proof of Theorem~\ref{thm:simple-dp}, we define the following dynamic programming table (here for \ProblemName{DFVS}). 
For a node $x$ of a tree decomposition of $G$, let $B_x$ be the associated bag and let $G_x$ be the subgraph induced in $G$ by vertices residing in $B_x$ or below $x$ in the decomposition.
Then, for every subset $X$ of $B_x$ and every ordering $\sigma$ of $B_x\setminus X$,
we store the smallest size of a subset $Y$ of $V(G_x)\setminus B_x$ such that $G_x-(X\cup Y)$ is acyclic and admits a topological ordering whose restriction to $B_x\setminus X$ is exactly $\sigma$.
Dynamic programming algorithm for {\sc{DFAS}} is defined similarly.
While we believe that this simple formulation of dynamic programming for \ProblemName{DFVS} and \ProblemName{DFAS}
on a tree decomposition should have been known, we did not find it in the literature and hence we include it for the sake of completeness.

Our next result states then that the running time of the algorithm of Theorem~\ref{thm:simple-dp} is tight under the Exponential Time Hypothesis (ETH) (see the Preliminaries section for definitions).

\begin{theorem}\label{thm:main-lb}
Unless ETH fails, there is no algorithm that determines the minimum size of a directed feedback vertex set or of a directed feedback arc set in
a given digraph in time $2^{o(t\log t)}\cdot n^{\Oh(1)}$, 
where $t$ is the treewidth of the input graph and $n$ is the number of its vertices.
\end{theorem}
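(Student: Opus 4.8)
The plan is to prove the lower bound for \ProblemName{DFVS} first; the bound for \ProblemName{DFAS} will then follow by a treewidth-preserving reduction, as explained at the end. The starting point is the $k\times k$-\textsc{Clique} problem: given a graph $G$ with vertex set $[k]\times[k]$, decide whether $G$ has a clique containing exactly one vertex from each row $\{i\}\times[k]$. Under ETH, this problem has no algorithm running in time $2^{o(k\log k)}\cdot n^{\Oh(1)}$ (see, e.g., \cite{ParametrizedAlgorithmsBook}). The grid shape is exactly what forces the $\log$-factor: a solution is a tuple $(c_1,\dots,c_k)\in[k]^k$, that is, $k$ coordinates of $\log k$ bits each, and the orderings of an $\Oh(k)$-element set -- precisely the information that the dynamic programming of Theorem~\ref{thm:simple-dp} must carry across a bag -- number $2^{\Theta(k\log k)}$, just enough to store such a tuple. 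So the goal is to build from $G$ a digraph whose minimum solutions correspond to ``guesses'' of such a tuple, recorded by a topological ordering rather than by which vertices are deleted.

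Concretely, I would construct from $G$ a digraph $H$ with $|V(H)|=k^{\Oh(1)}$ and $\tw(H)=\Oh(k)$, together with a budget $b$, built around a set $R$ of $\Theta(k)$ \emph{register} vertices with no arcs among them. The idea is that a minimum directed feedback vertex set of $H$ never deletes a register vertex, so in the resulting acyclic digraph the registers appear in some topological order $\rho$; through a fixed coordinate-wise encoding -- one cannot afford a private block of registers per coordinate, since that would need $\Theta(k\log k)$ registers, so this is an ordering of all of $R$ at once, e.g.\ along the lines of the factorial number system -- the order $\rho$ spells out a tuple $(c_1,\dots,c_k)\in[k]^k$, and each entry $c_i$ can be recovered from $\rho$ by a gadget on $\Oh(k)$ vertices. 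For each pair $i<j$ the digraph contains a \emph{verification} gadget on $\Oh(k)$ private vertices, attached only to $R$, that admits a solution of a prescribed local cost exactly when $c_i$ and $c_j$ (as read off from $\rho$) satisfy $(i,c_i)(j,c_j)\in E(G)$, and costs strictly more otherwise; in other words, the non-edges of $G$ become forbidden order patterns on $R$. Setting $b$ to be the sum of these prescribed local costs plus whatever fixed amount the remainder of the construction needs, $H$ has a directed feedback vertex set of size at most $b$ if and only if $\rho$ encodes a common cheap solution of all verification gadgets, i.e.\ iff the encoded tuple yields a clique $\{(i,c_i):i\in[k]\}$ of $G$. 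Since each verification gadget shares only $R$ with the rest of $H$, bags of the form $R\cup(\text{one gadget's private vertices})$ form a tree decomposition of width $\Oh(k)$.

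The technical heart of the reduction -- and where I expect most of the difficulty to lie -- is realising the encoding and the verification gadget within this $\Oh(k)$-width budget. The encoding must be readable coordinate by coordinate even though it is an ordering of all of $R$; the verification gadget must test an arbitrary bipartite adjacency between two $(\log k)$-bit values extracted from that ordering using only $\Oh(k)$ vertices, and it must raise the local cost on exactly the non-edges of $G$ without by itself forbidding any order $\rho$. One must also rule out cheating: deleting a register vertex should never help, which is arranged by making each value $c_i$ relevant to all $k-1$ verification gadgets for pairs containing $i$, so that destroying it costs at least $k-1>1$; and the budget $b$ must be tight enough that a solution of size at most $b$ spends exactly the prescribed amount in every gadget, hence cannot absorb the extra cost triggered by a non-edge. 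Granting these gadgets, correctness is the routine two-way check: a clique of $G$ gives, by fixing $\rho$ to its column vector and completing $H$ minus the canonical deletions gadget by gadget, a directed feedback vertex set of size exactly $b$; conversely any directed feedback vertex set of size at most $b$ leaves a single tuple encoded throughout $H$, which the verification gadgets certify to be a clique.

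Putting $t=\tw(H)=\Oh(k)$ and $n=|V(H)|=k^{\Oh(1)}$, a hypothetical algorithm for \ProblemName{DFVS} with running time $2^{o(t\log t)}\cdot n^{\Oh(1)}$ would decide $k\times k$-\textsc{Clique} in time $2^{o(k\log k)}\cdot k^{\Oh(1)}$, contradicting ETH. For \ProblemName{DFAS}, apply to $H$ the standard polynomial-time reduction (see \cite[Proposition 8.42]{ParametrizedAlgorithmsBook}) that replaces each vertex $v$ by an arc $v^-\to v^+$ inheriting the arcs incident with $v$: the resulting digraph has a minimum directed feedback arc set of the same size (attained using only the $|V(H)|$ new arcs) and treewidth $\Oh(k)$, so a $2^{o(t\log t)}\cdot n^{\Oh(1)}$ algorithm for \ProblemName{DFAS} would refute ETH as well.
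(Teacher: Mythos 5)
Your high-level plan---encode a choice tuple in the topological order of a small register set and verify the pairwise constraints with gadgets attached only to the registers---has the right shape, and the paper's actual proof follows the same philosophy. However, your proposal has a genuine gap, and you say so yourself: you never construct the encoding readout nor the verification gadgets, and both are where the entire difficulty of the reduction lies.

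Two concrete problems with the route you sketch. First, the factorial-number-system encoding is a poor match for what a topological ordering gives you. To read off digit $i$ of a factorial representation you must count inversions involving position $i$, a global statistic; a topological ordering only exposes pairwise comparisons locally. It is not clear how to build an $\Oh(k)$-vertex gadget, attached only to $R$, that recovers a $\log k$-bit digit from inversion counts without itself blowing up treewidth or interfering with the other readout gadgets. Second, your ``verification gadget'' must realize an \emph{arbitrary} bipartite adjacency predicate on $[k]\times[k]$, raising cost on exactly the non-edges, with $\Oh(k)$ private vertices attached only to $R$. While the $\Oh(k)\cdot|R| = \Oh(k^2)$ potential attachment arcs are information-theoretically enough to encode the adjacency matrix, you give no mechanism by which the digraph's acyclicity cost as a function of $\rho$ actually implements the predicate. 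Declaring ``granting these gadgets, correctness is the routine two-way check'' is precisely backwards: the gadgets \emph{are} the proof.

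The paper avoids both obstacles by a different decomposition. It reduces from \ProblemName{$k\times k$ Hitting Set with thin sets} rather than $k\times k$-\textsc{Clique}; thinness means each constraint names at most one cell per row. The encoding is not factorial but positional: a chain of $k+1$ anchors $k{+}1,\dots,2k{+}1$ is forced into sorted order, and $c_i$ is read off as the slot into which index $i$ falls---a purely pairwise, hence gadget-friendly, readout realized by the single 3-constraint $(k{+}j,i,k{+}j{+}1)$. Each thin set then yields a clause of length at most $k$ over such 3-constraints (Lemma~\ref{lem:3PermutationSat}); these are chained into 3-CNF over 2-constraints with auxiliary indices, keeping the incidence-graph treewidth $\Oh(k)$ (Lemma~\ref{lem:2PermutationSat}); and finally each length-3 clause is implemented by the constant-size or-gadget of Figure~\ref{fig:gadget} (Lemma~\ref{lem:inspectOrGadget}). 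Nothing in this pipeline requires inventing an $\Oh(k)$-vertex adjacency-matrix gadget. As a minor point, the paper's or-gadget construction works verbatim for \ProblemName{DFAS} (delete two arcs instead of two vertices), so no separate vertex-splitting reduction is needed, though yours would also be fine.

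To close the gap in your proposal you would have to either actually exhibit the factorial readout and verification gadgets with the claimed size and treewidth bounds---which I do not see how to do---or else restructure along the paper's lines: switch to thin-set constraints, use positional rather than factorial encoding, and decompose clauses until they are small enough to realize with a constant-size gadget.
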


The proof of Theorem~\ref{thm:main-lb} uses the approach of Lokshtanov et al.~\cite{LokshtanovMS11} for proving slightly super-exponential lower bounds for the complexity of parameterized problems.
More precisely, we give a parameterized reduction from the \ProblemName{$k\times k$ Hitting Set with thin sets} problem, for which a lower bound excluding running time $2^{o(k\log k)}\cdot n^{\Oh(1)}$ under ETH
was given in~\cite{LokshtanovMS11}.
As an intermediate step, we use problems asking for permutations that satisfy certain constraints; we remark  that somewhat similar constraint satisfaction problems, though with different constraints, were previously studied by Kim and Gon{\c{c}}alves~\cite{KimG13}.

Finally, we move to the setting of planar graphs, where we prove that the running time can be improved to $2^{\Oh(t)}\cdot n^{\Oh(1)}$.

\begin{theorem}\label{thm:main-algo}
There is an algorithm that given a planar digraph $G$ of treewidth $t$ on $n$ vertices, runs in time $2^{\Oh(t)}\cdot n^{\Oh(1)}$
and determines the minimum size of a directed feedback vertex set in $G$.
\end{theorem}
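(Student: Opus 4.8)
The plan is to combine the slice/sphere-cut decomposition machinery for planar graphs with the Cut \& Count technique, so that we never have to store a full ordering of a bag but only a constant amount of information per bag vertex together with a connectivity-style parity count. First I would invoke the standard fact that a planar graph of treewidth $t$ admits a branch decomposition of width $\Oh(t)$ that is moreover a \emph{sphere-cut decomposition}: every cut is realised by a noose (a closed curve in the plane meeting the graph only in vertices), and the vertices on the noose can be cyclically ordered along it. Fix a planar embedding of $G$. The key structural observation is that if $G-Z$ is acyclic, then it has a topological ordering, and one can ask how this ordering interacts with each noose: the noose separates the plane into an inside and an outside, and along the noose the "interface" vertices $B$ appear in a fixed cyclic order $v_1,\dots,v_\ell$. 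I would like to argue, using planarity, that the information about the topological ordering that must be passed through a noose is not an arbitrary permutation of $B$ but is compressible to $2^{\Oh(\ell)}$ possibilities — intuitively, because arcs crossing the noose are themselves embedded in the plane and cannot "braid" arbitrarily.

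The main technical engine is a Cut \& Count formulation. Rather than deciding acyclicity directly, I would reduce to counting, modulo $2$, certain annotated objects: a candidate deletion set $Z$ of size $k$ together with a \emph{consistent labelling} that witnesses acyclicity of $G-Z$. Concretely, acyclicity of a digraph is equivalent to the existence of a function assigning each vertex a "level", with every arc going from lower to higher level; to make this usable in a bounded-width DP one restricts to levels in $\{1,\dots,n\}$ and counts, with cancellation, the pairs (deletion set, level function) weighted so that the count is odd iff a feedback vertex set of size $k$ exists. The randomisation/isolation lemma step (as in the standard Cut \& Count framework) ensures that for the right target weight the parity faithfully reports feasibility with high probability. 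The planar DP over the sphere-cut decomposition then only needs to track, for the $\Oh(t)$ vertices on the current noose, a bounded amount of data — essentially their relative order \emph{restricted to what is forced by the arcs already embedded}, which by the non-braiding argument collapses to $2^{\Oh(t)}$ states — plus the running parity count. Processing one node of the decomposition (a join of two nooses into a third) is a convolution over these $2^{\Oh(t)}$ states, doable in $2^{\Oh(t)}\cdot n^{\Oh(1)}$ time, possibly using fast subset convolution to keep the exponent linear.

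I would carry the argument out in the following order. (1) Recall sphere-cut / branch decompositions for planar graphs and reduce, via the known $\Oh(t)$-width relation, to working on such a decomposition of width $w=\Oh(t)$. (2) State and prove the planar "non-braiding" lemma: given a noose with interface vertices in cyclic order, the set of topological orderings of $G-Z$ (for any $Z$) induces on the interface only $2^{\Oh(w)}$ distinct "crossing patterns", where a crossing pattern records, for each pair of interface vertices, which side of the noose carries the forcing path between them. (3) Set up the Cut \& Count reduction for \ProblemName{DFVS}: random weights on vertices, target weight $W$, count modulo $2$ pairs (deletion set of size $k$, level function) that are "locally monotone", and prove the isolation-lemma correctness statement. (4) Design the DP table indexed by (crossing pattern on the noose, multiset of level-residues of interface vertices, partial weight, partial count mod $2$) and give the transition at internal nodes as a convolution; bound the table size by $2^{\Oh(w)}\cdot n^{\Oh(1)}$ and the transition cost likewise. (5) Derandomise if desired, or simply report the Monte Carlo algorithm, and conclude the claimed running time for computing the minimum feedback vertex set by iterating over $k$.

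The hard part will be step (2): making precise and proving that planarity really does bound the number of order-interface patterns by $2^{\Oh(w)}$ rather than $w^{\Oh(w)}$. One must be careful that it is not the ordering of the $w$ noose vertices among themselves that is cheap — that genuinely costs $w!$ in the non-planar world, which is exactly what Theorem~\ref{thm:main-lb} forbids in general — but rather that, once we additionally remember which \emph{side} of the noose each order-constraint is enforced on, the embedding of the relevant paths in the disk forces the pattern to be "laminar" and hence only singly-exponentially many patterns survive. Getting the bookkeeping of this side-information to mesh cleanly with the Cut \& Count parity, and ensuring the convolution at join nodes respects both the planar laminarity constraint and the level-residue arithmetic, is where the real work lies; everything else is assembling standard components.
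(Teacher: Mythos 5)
You correctly identify the sphere-cut decomposition as the right tool, and you correctly locate the crux: a planar compression lemma reducing the number of interface states from $w^{\Oh(w)}$ to $2^{\Oh(w)}$. That much matches the paper. But there are two problems, one a genuine gap and one an unnecessary detour that creates further gaps.

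The genuine gap is step (2). You write ``I would like to argue, using planarity, that the information \dots is compressible to $2^{\Oh(\ell)}$ possibilities'' and later appeal to the patterns being ``laminar''. This is the entire technical heart of the theorem, and no proof is offered; the intuition that planar paths ``cannot braid'' is not sufficient, because the relevant structure is not a set of noncrossing paths but the full reachability relation on the interface, which a priori has $2^{\Theta(w^2)}$ possibilities. The paper proves the compression statement (Theorem~\ref{thm:patterns}) via a nontrivial argument: it shows every reachability pattern on the boundary of the disk is \emph{generated} (Lemmas~\ref{pgenerp}--\ref{addedge}) by a simpler chord set, iteratively reduces crossings in that chord set via an exchange argument (Lemma~\ref{change}) until the induced circle graph has bounded clique number (Lemma~\ref{clique}), and then invokes Gy\'arf\'as's $\chi$-boundedness of circle graphs (Theorem~\ref{Gyarfas}) to split into constantly many outerplanar chord sets, each of which contributes $2^{\Oh(w)}$. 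None of that is present or implied by ``laminarity''. Without it, the claimed $2^{\Oh(w)}$ state bound has no justification.

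The Cut \& Count layer is an unnecessary detour that introduces its own difficulties. The paper's DP needs no counting, no randomization, no isolation lemma, and no level functions: once you know there are only $2^{\Oh(w)}$ reachability patterns on a noose, you simply store, for each interface subset $X$ and each pattern $P$, the minimum interior deletion set size $\Val[e,X,P]$ realising $P$ on $\med(e)\setminus X$, and joins are checked directly by gluing patterns and testing for created cycles. In your plan, by contrast, the Cut \& Count formulation is sketchy at a point where it would actually have to work: counting pairs (deletion set, level function) does not obviously fit the cancellation template, since there is no witness structure whose ``consistent cuts'' group the wrong counts into even blocks, and the number of valid level functions of a yes-instance is not controlled. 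Moreover storing a ``multiset of level-residues'' cannot enforce the strict inequalities that a level function must satisfy across the noose, so step (4) as written would not correctly check acyclicity. All of this machinery can simply be dropped in favour of the deterministic pattern-indexed DP, once the compression lemma is actually proved.
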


It is well known that the treewidth of a planar graph on $n$ vertices is bounded by $\Oh(\sqrt{n})$; see e.g.~\cite{FominT06}. This yields the following.

\begin{corollary}\label{cor:exact}
There is an algorithm that given a planar digraph $G$ on $n$ vertices, runs in time $2^{\Oh(\sqrt{n})}$
and determines the minimum size of a directed feedback vertex set in $G$.
\end{corollary}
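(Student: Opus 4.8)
The plan is to simply compose Theorem~\ref{thm:main-algo} with the classical treewidth bound for planar graphs. First I would recall the fact cited as~\cite{FominT06}: every planar graph on $n$ vertices has treewidth $\Oh(\sqrt{n})$. For our purposes we need this constructively, i.e., we need to actually produce a tree decomposition of (the underlying undirected graph of) $G$ of width $\Oh(\sqrt{n})$ in polynomial time. This is standard: one can obtain such a decomposition in time $n^{\Oh(1)}$ by recursively applying the Lipton--Tarjan planar separator theorem, or equivalently by invoking any polynomial-time constant-factor approximation algorithm for treewidth on planar graphs together with the $\Oh(\sqrt{n})$ upper bound on the optimum.

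Having such a decomposition of width $t=\Oh(\sqrt{n})$ in hand, I would run the algorithm of Theorem~\ref{thm:main-algo} on $G$. Its running time is $2^{\Oh(t)}\cdot n^{\Oh(1)}$, and substituting $t=\Oh(\sqrt{n})$ yields $2^{\Oh(\sqrt{n})}\cdot n^{\Oh(1)}$. Since $n^{\Oh(1)}=2^{\Oh(\log n)}=2^{o(\sqrt n)}$, and the polynomial-time preprocessing that builds the decomposition is also absorbed into this bound, the overall running time simplifies to $2^{\Oh(\sqrt{n})}$, which is exactly the claimed bound.

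There is essentially no genuine obstacle in this argument; it is a routine corollary. The only point that deserves explicit mention is that the tree decomposition of width $\Oh(\sqrt{n})$ must be produced algorithmically in polynomial time (not merely shown to exist), since that decomposition is the input consumed by the dynamic programming algorithm of Theorem~\ref{thm:main-algo}; as noted above, the recursive planar separator construction (or a polynomial-time treewidth approximation on planar graphs) supplies it.
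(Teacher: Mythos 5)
Your proposal is correct and matches the paper's (one-line) argument: apply Theorem~\ref{thm:main-algo} and use the $\Oh(\sqrt{n})$ treewidth bound for planar graphs. One small point of over-caution: the concern about producing a width-$\Oh(\sqrt{n})$ decomposition constructively is moot, because the algorithm of Theorem~\ref{thm:main-algo} as stated takes only the digraph $G$ as input and internally computes an \emph{optimal}-width sphere-cut decomposition in polynomial time via Theorem~\ref{thm:planarScDecomposition} (Seymour--Thomas / Gu--Tamaki), so no separator-based or approximation-based preprocessing on your side is needed.
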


Note that the algorithm of Corollary~\ref{cor:exact} is tight under ETH, due to the aforementioned simple reduction from {\sc{Vertex Cover}} to {\sc{DFVS}} which preserves planarity.
Since {\sc{Vertex Cover}} on planar graphs
cannot be solved in time $2^{o(\sqrt{n})}$ 
under ETH (see~\cite[Theorem~14.6]{ParametrizedAlgorithmsBook}),
the same lower bound carries over to {\sc{DFVS}} on planar digraphs (implying also a tight lower bound of $2^{o(t)} \cdot n^{\Oh(1)}$ for the parameterization by treewidth on planar digraphs).

The proof of Theorem~\ref{thm:main-algo} is perhaps conceptually the most interesting part of this work.
The basic idea is to use {\em{sphere-cut decompositions}} of plane graphs~\cite{DornPBF10,SeymourT94}.
Namely, as observed by Dorn et al.~\cite{DornPBF10}, from the results of Seymour and Thomas~\cite{SeymourT94} it follows that every plane graph admits an optimum-width branch decomposition
that respects the plane embedding in the following sense: each subgraph corresponding to a subtree of the decomposition is embedded into a disk so that the interface of the subgraph---vertices 
adjacent to the remainder of the graph---are embedded on the boundary of the disk. Such a branch decomposition is called a {\em{sphere-cut decomposition}}. Since branchwidth is linearly related to
treewidth, in the proof of Theorem~\ref{thm:main-algo} we may focus on branch decompositions instead of tree decompositions.

As shown by Dorn et al.~\cite{DornPBF10}, the topological properties of sphere-cut decomposition can be exploited algorithmically to bound the number of relevant states in dynamic programming.
This idea is instantiated in the technique of {\em{Catalan structures}} where for some connectivity problems, like {\sc{Hamiltonian Cycle}}, the fact that the solution cannot self-intersect in the plane leads
to an improvement on the number of states from $2^{\Oh(b\log b)}$ to $2^{\Oh(b)}$; here, $b$ is the width of the considered sphere-cut decomposition. However, in the case of {\sc{DFVS}} we cannot use Catalan
structures directly, since we are not building any connected structure whose plane embedding would impose useful constraints.

Our main contribution here is that nevertheless, an improved upper bound on the number of relevant states can be shown, with a conceptually new reasoning.
Consider a directed graph $G$ embedded into a disk $\Delta$ and a subset $T$ of its vertices that are placed on the boundary of $\Delta$.
Let the {\em{connectivity pattern}} induced by $G$ on $T$ be the reachability relation in $G$ restricted to $T^2$: $(s,t)$ are in the connectivity pattern if and only if in $G$ there is a path from $s$ to $t$.
The crucial combinatorial statement (see Theorem~\ref{thm:patterns}) is as follows: 
the number of different connectivity patterns on $T$ that may be induced by different digraphs $G$ embedded in $\Delta$ is bounded by $2^{\Oh(|T|)}$; note
that the naive bound would be $2^{\Oh(|T|^2)}$. This directly provides the sought upper bound on the number of relevant states 
in dynamic programming on a sphere-cut decomposition, leading to the proof of Theorem~\ref{thm:main-algo}.
To prove this statement, we show that every realizable connectivity pattern can be encoded using a constant number of simpler relations, each forming a directed outerplanar graphs on $|T|$ vertices;
the number of different such digraphs is $2^{\Oh(|T|)}$. In the proof that such an encoding is possible we use the result of Gy\'{a}rf\'{a}s that circle graphs are $\chi$-bounded~\cite{Gyarfas85,Gyarfas86}.

\subparagraph*{Organization.} In Section~\ref{sect:prelims} we establish notation and recall known results that will be used throughout the paper.
Section~\ref{sect:connectivityPatterns} concerns the main ingredient of the proof of Theorem~\ref{thm:main-algo}, namely the combinatorial upper bound on the number of different connectivity patterns
induced by disk-embedded directed graphs. We conclude the proof of Theorem~\ref{thm:main-algo} in Section~\ref{sect:dynamic} by giving the dynamic programming algorithm.
Section~\ref{sect:pfthm1} describes the dynamic programming of Theorem~\ref{thm:simple-dp}, Section~\ref{sect:lowerBound} contains the hardness reduction for Theorem~\ref{thm:main-lb},
 whereas in Section~\ref{sect:conclusions} we summarize the results and state some open problems.

\section{Preliminaries}\label{sect:prelims}
\subparagraph*{Notation.} 
For a positive integer $k$, we denote $[k]=\{1,2,\ldots,k\}$. We use standard graph notation, see e.g.~\cite{
	ParametrizedAlgorithmsBook,
	Diestel2010GraphTheoryBook}.
A \emph{clique} of a graph is a set of pairwise adjacent
vertices. The \emph{clique number} of a graph G, denoted by $\omega(G)$, is the maximum number of vertices
in a clique in $G$. The {\em{chromatic number}} of a graph $G$, denoted by $\chi(G)$, is the minimum number of colors needed in a proper coloring of $G$, that is, a coloring of its vertices where
every two adjacent vertices receive different colors.

\subparagraph*{Chords and circle graphs.}
A \emph{chord} is an unordered pair of distinct points on a circle, called \emph{endpoints} of the chord; one may think of it as a straight line segment between its endpoints.
Two chords $\{a,a'\}, \{b,b'\}$ of a circle \emph{cross} if their endpoints are all distinct and $a,b,a',b'$ occur in this order on the circle (clockwise or counter-clockwise). 
Intuitively this corresponds to the straight line segments $aa'$ and $bb'$ intersecting inside the circle.
A \emph{circle graph} is a graph whose vertices correspond to chords of a circle so that two vertices are adjacent if and only if the corresponding chords cross.\footnote{One might consider an alternative
definition where two chords that do not cross but share an endpoint are also connected by an edge. However, this leads to the same class of graphs for the following reason. A set of chords sharing an endpoint
$a$ may be slightly perturbed around $a$ so that they pairwise cross, and they also may be slightly perturbed so that they pairwise do not cross. We choose to use our variant of the definition so that some arguments are smoother.}
A \emph{circle graph with directed chords} is a circle graph in which every chord is directed; that is, it is an ordered pair.
A directed chord with {\em{tail}} $a$ and {\em{head}} $b$ is denoted by $(a,b)$.

Let $T$ be a finite set of points on a circle and let $R\subseteq T^2$ be a set of chords (directed or undirected). 
A \emph{crossing} is a pair of crossing chords in $R$.
The circle graph {\em{induced}} by $R$ is the one with $R$ as the vertex set where two chords from $R$ are adjacent if they cross.

As introduced by Gy\'{a}rf\'{a}s~\cite{gyarfas1987problems}, 
a class $\mathcal{C}$ of graphs closed under induced subgraphs is \emph{$\chi$-bounded} if there exists a function $ f \colon \mathbb{N} \rightarrow \mathbb{N}$ such that for every graph $G \in \mathcal{C}$ we have $\chi(G)\leq f(\omega(G))$.
Gy\'{a}rf\'{a}s \cite{Gyarfas85,Gyarfas86} proved the following.

\begin{theorem}[\cite{Gyarfas85,Gyarfas86}]\label{Gyarfas}
The class of circle graphs is $\chi$-bounded.
\end{theorem}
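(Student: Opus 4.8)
The plan is to prove the quantitatively stronger statement that every circle graph $G$ satisfies $\chi(G)\le f(\omega(G))$ for some fixed function $f$, say of the form $f(k)=2^{\Oh(k)}$ (the exact shape of $f$ is irrelevant for $\chi$-boundedness, so I will not optimise it), by induction on $k=\omega(G)$. The base case $k\le 1$ is immediate since then $G$ is edgeless. It is convenient to fix a chord representation and cut the circle at a point that is not an endpoint of any chord, viewing each chord $v$ as an interval $I_v$ on a line, so that two vertices are adjacent exactly when their intervals \emph{properly overlap} (they intersect but neither contains the other). The one structural fact used throughout is: for any chord $c$, the set $N(c)$ of chords crossing $c$ induces a \emph{permutation graph}; indeed the chords crossing $c$ are in bijection with pairs consisting of one point on each of the two arcs into which $c$ splits the circle, and two of them cross iff the two induced linear orders agree — equivalently, in the interval picture, intervals all containing a common point have proper overlap as their only possible adjacency and containment as their only possible non-adjacency, which is precisely the incomparability graph of a two-dimensional order. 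Permutation graphs are perfect, so $G[N(c)]$ is properly colourable with $\omega(G[N(c)])\le\omega(G)-1$ colours, the inequality holding because $c$ together with any clique of $N(c)$ is a clique of $G$.

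For the inductive step the first reduction is a breadth-first layering. Pick any vertex $c_0$ (treating connected components separately) and let $V_0=\{c_0\},V_1,V_2,\dots$ be the distance classes from $c_0$ in $G$. Since $G$ has no edges between $V_i$ and $V_j$ whenever $|i-j|\ge 2$, any family of proper colourings of the graphs $G[V_i]$ using a common palette of size $s$ combines into a proper colouring of $G$ with $2s$ colours (one palette on the even layers, a disjoint copy on the odd layers). Hence it suffices to bound $\chi(G[V_i])$ for a single layer $i$ by some $g(\omega)$, which then yields $f(k)\le 2g(k)$.

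The heart of the argument is bounding $\chi(G[V_i])$ for $i\ge 1$, and here the leverage is that every vertex of $V_i$ crosses some chord of $V_{i-1}$, so $V_i$ ``hangs off'' $V_{i-1}$: assign to each $v\in V_i$ a \emph{parent} $\pi(v)\in V_{i-1}$ with $v\sim\pi(v)$, and note that each fibre $\pi^{-1}(c)\subseteq N(c)$ induces a permutation graph of clique number at most $\omega-1$, hence is $(\omega-1)$-colourable. The main obstacle is that $G[V_i]$ also has edges running \emph{between} distinct fibres, so the fibres cannot be coloured independently; one must use the cyclic geometry of the chord diagram to control how fibres interact according to whether their parents are adjacent or not, and then argue that after peeling off a carefully chosen union of fibres — properly coloured with $\Oh(\omega)$ colours via its permutation structure — the remaining graph is a circle graph of strictly smaller clique number, so that the induction on $\omega$ applies to it. Organising this peeling (choosing in each round which extremal chords to delete so that the clique number provably drops while the deleted part stays perfect) is the delicate combinatorial core of Gyárfás's proof and is what makes his bound exponential in $\omega$; later work has improved $f$ to polynomial and then to $\Oh(\omega\log\omega)$, but for the present statement any finite $f$ suffices.
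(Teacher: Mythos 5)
The paper does not give a proof of this theorem; it is imported as a black box from Gy\'arf\'as~\cite{Gyarfas85,Gyarfas86}, so there is no paper-internal argument to compare against. Judged on its own, your sketch correctly sets up the standard scaffolding: induction on $\omega$, the observation that the chords crossing a fixed chord $c$ form a permutation graph (hence a perfect graph with clique number at most $\omega-1$), and the BFS layering which reduces the task to colouring a single layer $G[V_i]$ at a multiplicative cost of $2$.

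However, the argument has a genuine gap exactly where you flag it. The final paragraph does not prove anything: it asserts that one must ``use the cyclic geometry of the chord diagram'' and that ``after peeling off a carefully chosen union of fibres \dots\ the remaining graph is a circle graph of strictly smaller clique number,'' without specifying which fibres to peel, why a union of distinct fibres (an arbitrary union of permutation graphs, which need not itself be perfect or even a permutation graph) can be coloured with $\Oh(\omega)$ colours, or why the clique number of what remains drops. None of these steps is routine: a clique of $G[V_i]$ may have its vertices spread over many fibres with pairwise non-adjacent parents, so there is no immediate reason for $\omega(G[V_i])$ to be smaller than $\omega(G)$, and already the $5$-cycle shows that a BFS layer of a graph can have the same clique number as the whole graph. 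Since you explicitly label the missing step ``the delicate combinatorial core of Gy\'arf\'as's proof,'' what you have written is a strategy outline, not a proof; the inductive step is never closed.
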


\subparagraph*{Exponential Time Hypothesis.}
The Exponential Time Hypothesis (ETH) states that for some constant $c>0$, there is no algorithm for {\sc{3SAT}}
with running time $\Oh(2^{cn})$, where $n$ is the number of variables of the input formula. Since its formulation by Impagliazzo, Paturi, and Zane~\cite{ImpagliazzoPZ01},
ETH has served as a basic assumption for countless lower bounds on the complexity of computational problems.
We refer to~\cite[Chapter 14]{ParametrizedAlgorithmsBook} for a comprehensive overview of applications in parameterized complexity.

In this work we do not use ETH directly, but we rely on results of Lokshtanov et al.~\cite{LokshtanovMS11} 
who introduced a methodology for refuting ``slightly super-exponential'' running time for parameterized problems, under ETH.
In particular, the hardness reduction proving Theorem~\ref{thm:main-lb} starts with the \ProblemName{$k\times k$ Hitting Set with thin sets} problem, for which a lower bound is given in~\cite{LokshtanovMS11}.
Relevant definitions and results are recalled in Section~\ref{sect:lowerBound}.

\section{Connectivity Patterns}\label{sect:connectivityPatterns}
\newcommand{\clos}[1]{\mathsf{gen}(#1)}

In this section we present the main combinatorial result leading to the proof of Theorem~\ref{thm:main-algo}, which is a reduction of the number of relevant dynamic programming states in the planar setting.
As we mentioned, this is done by bounding the number of ``connectivity patterns'' that can be induced by directed graphs embedded in a disk.
We now formalize this idea.

Suppose $T$ is a finite set. A \emph{connectivity pattern} on $T$ is any quasi-order on $T$, that is, a reflexive and transitive relation $P\subseteq T^2$.
For a directed graph $G$ and a vertex subset $T\subseteq V(G)$, we define the connectivity pattern \emph{induced by $G$ on $T$} to be the reachability relation on $T$ in $G$:  
$(s,t)$ is in the relation iff there is a path in $G$ from $s$ to $t$.

The main goal of this section is to prove a result that roughly states the following: for a directed graph $G$ drawn in a closed disk, with $T$ be the vertices lying the boundary of the disk, 
there are only $2^{\Oh(|T|)}$ different possibilities for the connectivity pattern that $G$ may induce. See Theorem~\ref{thm:patterns} for a formal statement.
As mentioned in the introduction, this result will be our main tool for limiting the number of relevant states in dynamic programming for {\sc{Directed Feedback Vertex Set}} on planar graphs.
Note that in general directed graphs, the number of different connectivity patterns induced on a vertex subset $T$ may be as large as $2^{\Theta(|T|^2)}$.
For instance, any subset of pairs with tail in the first half of $T$ and head in the second half already gives that many possibilities.

The idea of the proof is that such connectivity patterns induced by directed planar graphs embedded in a disk can be generated from simpler relations, 
which contain enough pairs to infer all the other ones from planarity.
This is formalized in the following definition.

\begin{definition}
For a set $T$ of points on a circle and a relation $R\subseteq T^2$,
define the connectivity pattern on $T$ \emph{generated by $R$}, denoted $\clos{R}$, as follows:
a pair $(s,t)\in T^2$ is included in $\clos{R}$ if and only if
for each partition of the circle into two disjoint arcs $X_s,X_t$ such that $s\in X_s$ and $t\in X_t$,
there exist $s'\in X_s$ and $t'\in X_t$ which satisfy $(s',t')\in R$.
\end{definition} 

In the above definition, as well as throughout this whole section, arcs on a circle may be open or closed from either side, unless explicitly stated.

It is easy to check that $R \subseteq \clos{R}$ and $\clos{R}$ is indeed reflexive and transitive, for any $R\subseteq T^2$.
Hence $\clos{R}$ also contains the reflexive transitive closure of $R$, but it may be much larger still.
Furthermore, one can observe that $\clos{\clos{R}} = \clos{R}$, but we will not use this property.
We now show that a connectivity pattern induced by a graph is generated by itself; the goal will be then to find simpler relations generating this pattern.

\begin{lemma}\label{pgenerp}
Let $G$ be a planar digraph drawn in a disk $\Delta$, $T$ be a subset of vertices drawn on the boundary of $\Delta$, and $P$ be the connectivity pattern on $T$ induced by $G$. Then $\clos{P} = P$.
\end{lemma}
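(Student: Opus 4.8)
We need to show $\clos{P} = P$ where $P$ is the connectivity pattern induced by $G$ on $T$, and $G$ is drawn in a disk with $T$ on the boundary. We already know $P \subseteq \clos{P}$ in general, so the work is entirely in showing $\clos{P} \subseteq P$. Equivalently: if $(s,t) \notin P$ (there is no directed path from $s$ to $t$ in $G$), then $(s,t) \notin \clos{P}$, i.e. we must exhibit a partition of the boundary circle into two arcs $X_s \ni s$ and $X_t \ni t$ such that no pair $(s', t') \in P$ has $s' \in X_s$ and $t' \in X_t$.

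The plan is to use the set $S$ of vertices of $T$ reachable *from* $s$ in $G$ (so $s \in S$ and $t \notin S$, since $(s,t)\notin P$); dually, let $U$ be the set of vertices of $T$ that can reach $t$, so $t \in U$ and $S \cap U = \emptyset$. I would like to say that $S$ and $T\setminus S$ occupy complementary arcs of the boundary circle, but that need not be literally true — reachability from $s$ need not be "contiguous" on the boundary. However, the key planarity fact I want to exploit is a separation statement: consider the set $A \subseteq V(G)$ of *all* vertices of $G$ (not just those in $T$) reachable from $s$, and the set $B \subseteq V(G)$ of all vertices that reach $t$. These are disjoint, and crucially there is no arc from $A$ to $B$ (such an arc would create an $s \to t$ path). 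So in the underlying plane drawing, $A$ and $B$ are "separated" in the sense that every directed path must cross from $B$-side to $A$-side — but since $G$ is drawn in a disk with all of $T$ on the boundary, I can look at how $A \cap T$ and $B\cap T$ sit on the boundary circle. The cleanest route: take a closed curve (or a careful combinatorial argument on the planar embedding) witnessing that the "out-closure" region $\bar A$ containing $A$ and the "in-closure" region containing $B$ can be separated by a simple arc inside the disk whose two endpoints lie on the boundary circle; this arc cuts the boundary circle into two arcs, one containing $A \cap T$ (hence $s$) and the other containing $B \cap T$ (hence $t$). Then define $X_s, X_t$ to be these two arcs.

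It remains to check this partition defeats membership of $(s,t)$ in $\clos{P}$: suppose for contradiction $s' \in X_s$ and $t' \in X_t$ with $(s', t') \in P$, i.e. there is a directed path $\pi$ from $s'$ to $t'$ in $G$. Since $\pi$ is drawn inside the disk and its endpoints lie on opposite sides of the separating arc, $\pi$ crosses that arc; pick the first point of crossing. I want to argue that, by the way the separating arc was chosen to respect the out-closure/in-closure structure, $\pi$ must at some point pass from a vertex reachable from $s$ to a vertex not reachable from $s$ — contradiction, since reachability-from-$s$ is closed under following arcs forward. Making this crossing-to-combinatorics step rigorous is the main obstacle: I need the separating arc to be chosen so that it meets the drawing of $G$ only in faces (not in vertices or edge-interiors in an uncontrolled way), or alternatively to replace the topological separation by a purely combinatorial one — e.g. define $X_s$ directly as a maximal arc of the boundary such that every $T$-vertex on it is reachable from $s$, using a planar "left-right" sweep argument to show contiguity of $A \cap T$ after accounting for the full region $A$. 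In other words, the real content is: \emph{for a digraph drawn in a disk, the vertices of $T$ reachable from a fixed vertex $s$ need not form an arc, but the region swept by $A$ (all reachable vertices) does separate $T$ into the part "downstream of $s$" and the rest along the boundary, provided $s$ itself is on the boundary.} I expect to formalize this via Menger/planar-duality: non-existence of an $s\to t$ path is equivalent to existence of a "cut" separating them, and in a planar disk drawing with $s,t$ on the boundary this cut can be realized as a curve from boundary to boundary, whose intersection with the boundary circle gives exactly the desired partition. Once the separating curve is in hand, the argument that no $(s',t') \in P$ straddles it is a short crossing-number / forward-closure argument as sketched above.
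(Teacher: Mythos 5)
The paper's proof is short and entirely combinatorial: assuming $(s,t)\in\clos{P}$ but $(s,t)\notin P$, it defines $T_s=\{r\in T:(s,r)\in P\}$, lets $X_t$ be the \emph{largest} arc containing $t$ and disjoint from $T_s$, and sets $X_s$ to be the complementary arc. From $(s,t)\in\clos{P}$ one then gets $s'\in X_s$, $t'\in X_t$ with $(s',t')\in P$. Either $s'\in T_s$, and transitivity of $P$ gives $t'\in T_s\subseteq X_s$, a contradiction; or $s'\notin T_s$, and maximality of $X_t$ puts some $r\in T_s$ between $s'$ and $t'$ on the circle in the right cyclic order, so the witnessing paths $s\to r$ and $s'\to t'$ must cross in the disk, producing a path $s\to t'$ and again $t'\in T_s$. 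Your proposal takes a genuinely different route — a global topological separation ("find a curve from boundary to boundary separating the reach-set of $s$ from the co-reach-set of $t$") — rather than the paper's local maximal-arc-plus-path-crossing argument, and that route has two real gaps.

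First, as you acknowledge, you never actually construct the separating arc or show it has exactly two endpoints on the boundary circle; invoking "Menger/planar-duality" is a promissory note, not a proof. Second, and this is the more serious issue you seem not to have noticed: even granting such a curve $\gamma$, the final crossing-to-contradiction step fails. You want to say that if a path $\pi$ from $s'\in X_s$ to $t'\in X_t$ crosses $\gamma$, it must "pass from a vertex reachable from $s$ to a vertex not reachable from $s$." But $s'$ itself is in $X_s$ and $T$, not necessarily in your reach-set $A$ — indeed $T\cap X_s$ can contain many vertices in $T\setminus(A\cup B)$. So $\pi$ may never visit any vertex reachable from $s$, and crossing $\gamma$ yields no contradiction: an edge $(u,v)$ with $u$ on the $s$-side and $v$ on the $t$-side of $\gamma$ is perfectly allowed when neither $u\in A$ nor $v\in B$. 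The paper avoids this pitfall precisely by \emph{not} arguing that $\pi$ crosses a curve, but that $\pi$ crosses a concrete directed path $s\to r$ that is guaranteed to exist because $r\in T_s$; the existence and placement of $r$ are extracted from the maximality of $X_t$. To repair your approach you would have to strengthen the separation so that crossing $\gamma$ forces an edge out of $A$ itself (effectively making $\gamma$ separate $A$ from $V(G)\setminus A$), but that brings you back to exactly the contiguity issue you flagged as false. I'd recommend abandoning the separating-curve framing and using the maximal-arc device instead.
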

\begin{proof}
Let $C$ be the boundary of $\Delta$; we may assume that $C$ is a circle.
Clearly $P \subseteq \clos{P}$.
Now assume that $(s,t) \in \clos{P}$, that is, for each partition of $C$ into two disjoint arcs $X_s,X_t$ such that $s\in X_s$ and $t\in X_t$, there exist
$s'\in X_s$ and $t'\in X_t$ which satisfy
$(s',t')\in P$. We will show that $(s,t)\in P$. 

Assume the contrary, that is, $(s,t)\notin P$. Define 
$T_s=\{r\in T\colon (s,r)\in P\}$, see Figure~\ref{fig:P-generates-P}.
Let $X_t$ be the largest arc on $C$ that contains $t$ and is disjoint from $T_s$; this is well-defined since $t\notin T_s$ and $s \in T_s$. 
Define $X_s=C\setminus X_t$, thus $(X_s,X_t)$ is a partition of $C$ into two disjoint arcs. 
Since $s\in T_s$, we have $s\notin X_t$ and thus $s\in X_s$. From our assumption that $(s,t) \in \clos{P}$, there exist
$s'\in X_s$ and $t'\in X_t$ that satisfy $(s',t')\in P$.

We have two cases: either $s'\in T_s$ or $s'\notin T_s$. 
If $s'\in T_s$, then $(s,s')\in P$ and consequently $(s,t')\in P$, since $P$ is transitive due to being the reachability relation induced by $G$. 
But then $t'\in T_s$ and hence $t'\notin X_t$, a contradiction. Now assume 
$s'\notin T_s$; in particular $s'\neq s$. Let us move along the circle from $s$ to $t$ such that on the way we meet the point $s'$. Because the arc
$X_t$ was chosen to be the largest possible, 
between $s'$ and $t$ we meet a point $r\in T_s$. 
The arc $X_t$ is connected, so between $s$ and $r$ we cannot meet any point from the set $X_t$, in particular $t'$. 
That is, $s,s',r,t'$ appear in this order on the circle (either clockwise or counterclockwise).
Since $r \in T_s$, we have $(s,r) \in P$ and $(s',t') \in P$. Therefore, in $G$ there are directed paths from $s$ to $r$ and from $s'$ to $t'$. These two paths must intersect since they are drawn in a disk, which
yields a path in $G$ from $s$ to $t'$. We conclude that $t'\in T_s$ and hence $t'\notin X_t$, a contradiction. 
\end{proof}

\begin{figure}[b]
	\centering
	\parbox{0.4\linewidth}{
		\begin{tikzpicture}[->,>=stealth,thick]
	\def\centerarc[#1](#2)(#3:#4:#5){ \draw[#1] ([shift=(#3:#5)]#2) arc (#3:#4:#5); }

	\tikzstyle{Ts} = [black!30!green!70!blue,circle,fill,minimum size=0.5]
	\tikzstyle{t} = [black,rectangle,draw,fill=blue!10!white,minimum size=1]
	\def\rr{1.7}
	\draw[black!55!white,thick] (0,0) circle [radius=\rr];

	\node[Ts,label=below:$s$] (s) at (90:\rr) {};
	\node[Ts,label=-10:$r$] (r0) at (190:\rr) {};
	\node[Ts] (r1) at (30:\rr) {};
	\node[Ts] (r2) at (-25:\rr) {};
	\node[Ts,fill=white!40!gray,label=95:$t$] (t) at (-90:\rr) {};
	\node[Ts,fill=white!40!gray,label=15:$s'$] (sp) at (155:\rr) {};
	\node[Ts,fill=white!40!gray,label=90:$t'$] (tp) at (-65:\rr) {};
	\draw (s)--(r0);
	\draw (s)--(r1);
	\draw (s)--(r2);
	\draw (sp)--(tp);

	\centerarc[black!55!blue,thick,*-*](0,0)(-28:193:\rr+0.5);
	\centerarc[black!55!red,thick,)-(](0,0)(194:360-29:\rr+0.5);
	\node at (40:\rr+0.85) {$X_s$};
	\node at (-40:\rr+0.85) {$X_t$};
\end{tikzpicture}
		\caption{Proof of Lemma~\ref{pgenerp}: the induced pattern $P$ shown as arrows, points in $T_s$ depicted in green.}%
		\label{fig:P-generates-P}%
	}
	\qquad
	\begin{minipage}{0.4\linewidth}
		\begin{tikzpicture}[->,>=stealth,thick]
	\def\centerarc[#1](#2)(#3:#4:#5){ \draw[#1] ([shift=(#3:#5)]#2) arc (#3:#4:#5); }

	\tikzstyle{v} = [gray!85!blue,circle,fill,minimum size=0.5]
	\def\rr{1.7}

	\draw[black!55!white,thick] (0,0) circle [radius=\rr];

	\node[v,label=right:$a$] (a) at (90+45:\rr) {};
	\node[v,label=left:$b$] (b) at (45:\rr) {};
	\node[v,label=left:$a'$] (ap) at (-45:\rr) {};
	\node[v,label=right:$b'$] (bp) at (-90-45:\rr) {};
	\draw (a)--(ap);
	\draw (b)--(bp);
	\draw[gray,dashed] (a)--(bp);

	\centerarc[black!55!blue,thick,)-*](0,0)(-28:193:\rr+0.5);
	\centerarc[black!55!red,thick,)-*](0,0)(194:360-29:\rr+0.5);
	\node at (30:\rr+0.85) {$X_s$};
	\node at (-50:\rr+0.85) {$X_t$};
	\node at (193:\rr+0.85) {$u$};
	\node at (-28:\rr+0.85) {$v$};
\end{tikzpicture}
		\caption{Proof of Lemma~\ref{addedge}.
		}
		\label{fig:addEdge}%
	\end{minipage}
\end{figure}

The next lemma shows that generated connectivity patterns are closed under adding directed chords $(a,b')$ whenever $(a,a')$ and $(b,b')$ cross. 
This operation (and its inverse) is in fact the only one we will use to simplify the generating relation.


\begin{lemma}\label{addedge}
Let $T$ be a finite set of points on a circle and let $R\subseteq T^2$.
Let $a,b,a',b'\in T$ be distinct points that appear in this order on the circle, such that $(a,a')\in R$ and $(b,b')\in R$.
Let $R'=R\cup\{(a,b')\}$. 
Then $\clos{R} = \clos{R'}$.
\end{lemma}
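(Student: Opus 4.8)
Since $R\subseteq R'$, the definition of $\clos{\cdot}$ immediately gives $\clos{R}\subseteq\clos{R'}$, so the whole content is the reverse inclusion $\clos{R'}\subseteq\clos{R}$. The plan is to fix a pair $(s,t)\in\clos{R'}$ and an arbitrary partition of the circle into disjoint arcs $X_s\ni s$, $X_t\ni t$, and to produce witnesses $s'\in X_s$, $t'\in X_t$ with $(s',t')\in R$. By definition of $\clos{R'}$ we obtain $s''\in X_s$, $t''\in X_t$ with $(s'',t'')\in R'$. If $(s'',t'')\in R$ we are done, so the only interesting case is $(s'',t'')=(a,b')$, i.e.\ the single new chord. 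Thus we may assume $a\in X_s$ and $b'\in X_t$.

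The key observation is a geometric one about how the four points $a,b,a',b'$ (which appear in this cyclic order) can be split by the partition $(X_s,X_t)$. The partition is determined by two ``cut points'' $u,v$ on the circle, and the arc from $u$ to $v$ (one way round) is $X_s$, the other is $X_t$. Since $a\in X_s$ and $b'\in X_t$, at least one of the two cut points lies strictly on the arc from $a$ to $b'$ that passes through $b$ and $a'$ (reading the cyclic order $a,b,a',b'$). I would do a short case analysis on which of $b$, $a'$ falls on which side of the partition. If $b\in X_s$: then together with $(b,b')\in R$ and $b'\in X_t$, the pair $(b,b')$ already witnesses $(s,t)\in\clos{R}$ for this partition, taking $s'=b$, $t'=b'$. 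If $b\in X_t$ instead, then since the cyclic order is $a,b,a',b'$ with $a\in X_s$, $b\in X_t$, the ``transition'' from $X_s$ to $X_t$ happens between $a$ and $b$; but we also need to get back, and since $b'\in X_t$, one checks that $a'$ must lie in $X_s$ (otherwise $X_s$ would not be a single arc, or else we can still find a suitable pair). Then $(a,a')\in R$ with $a\in X_s$ and $a'\in X_t$... — more carefully, the clean way is: among the three points $a$ (in $X_s$) and $b,a',b'$, with $b'\in X_t$, look at the first point after $a$ in the cyclic order $a,b,a',b',\dots$ that lies in $X_t$. It is one of $b$, $a'$, $b'$. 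If it is $b$, the chord $(b,b')$ works after noting $b'\in X_t$. If it is $a'$, then $b\in X_s$, so the chord $(b,b')$ works with $b\in X_s$, $b'\in X_t$. If it is $b'$ (so $b,a'\in X_s$), then the chord $(b,b')$ again works with $b\in X_s$, $b'\in X_t$. In every branch the chord $(b,b')\in R$ — or $(a,a')\in R$ in the symmetric subcase — supplies the required pair within $(X_s,X_t)$, establishing $(s,t)\in\clos{R}$.

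For the reverse direction of the lemma's role (it is only needed that $\clos{R}=\clos{R'}$, both inclusions): the inclusion $\clos{R}\subseteq\clos{R'}$ needs nothing, and the argument above gives the other one. The main obstacle, and the only place care is needed, is the geometric case analysis: making sure that ``$X_s$ and $X_t$ are arcs'' is used correctly so that the cyclic positions of $b$ and $a'$ relative to the two cut points are pinned down, and handling the boundary cases where a cut point coincides with one of $a,b,a',b'$ or where arcs are taken open/closed on various sides. Since the paper explicitly allows arcs to be open or closed from either side, I would note that perturbing the cut points slightly does not change membership of any point of $T$ in $X_s$ or $X_t$ except possibly the cut points themselves, and handle those degenerate cases directly by observing that if a cut point equals one of our candidate witnesses we may assign it to whichever side makes the argument go through. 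Everything else is bookkeeping with the fixed cyclic order $a,b,a',b'$.
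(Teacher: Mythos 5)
Your overall strategy is the same as the paper's: reduce to showing the equivalence of witnesses for $R$ and $R'$ over an arbitrary arc partition $(X_s,X_t)$, dispose of the trivial direction, assume the new chord $(a,b')$ is the only available witness so that $a\in X_s$ and $b'\in X_t$, and do a case analysis over how the partition splits the four points $a,b,a',b'$. This is exactly the paper's proof in different clothing, so on that count there is no objection.

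However, your case analysis contains a genuine error in its first branch. You consider the first point after $a$ (in cyclic order $a,b,a',b'$) that lies in $X_t$, and in the branch where that point is $b$ you claim ``the chord $(b,b')$ works after noting $b'\in X_t$.'' But in this branch you have just placed $b$ in $X_t$, so the tail of $(b,b')$ is in $X_t$, not $X_s$, and $(b,b')$ cannot serve as a witness. The correct witness here is $(a,a')$: since $X_t$ is a single arc containing $b$ and $b'$ but not $a$, it must also contain $a'$ (an arc containing $b$ and $b'$ while avoiding $a$ necessarily covers everything between them on the $a'$ side), so $a\in X_s$ and $a'\in X_t$ gives the chord $(a,a')\in R$. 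The earlier informal passage has the same confusion with the sign flipped---you write ``one checks that $a'$ must lie in $X_s$'' and then immediately use $(a,a')$ with ``$a'\in X_t$''; the first of these is false, the second is what is actually needed. Your closing sentence hedges by saying ``or $(a,a')\in R$ in the symmetric subcase,'' which shows you sensed the issue, but the branch-by-branch argument as written never invokes $(a,a')$, so the proof does not actually close. With the first branch corrected to use $(a,a')$, the argument becomes correct and matches the paper's Lemma~\ref{addedge}.
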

\begin{proof}
It is enough to prove that for each partition of the circle into two disjoint arcs $X_s,X_t$,
the following two conditions are equivalent:
\begin{enumerate}[nosep]
\item[(1)] There exist
$s'\in X_s$ and  $t'\in X_t$ which satisfy
$(s',t')\in R$. 
\item[(2)] There exist
$s'\in X_s$ and  $t'\in X_t$ which satisfy
$(s',t')\in R'$.
\end{enumerate}
Of course (1) implies (2). Now assume (2). 
If $(s',t')\in R$ the proof is finished, so suppose $(s',t')=(a,b')$. Let $u,v$ be the ends of the arc $X_s$, see Figure~\ref{fig:addEdge}.
We may assume without loss of generality that $a,b,a',b'$ occur clockwise on the circle and are different from $u,v$; the latter is achieved by moving $u,v$ slightly to points not belonging to $T$.
Let $C_{a,b}$ be the arc of the circle from $a$ (inclusive) to $b$ (exclusive), going clockwise, and define $C_{b,a'},$ $C_{a',b'},$ $C_{b',a}$ analogously; these four arcs form a partition of the circle.
Since $a\in X_s$ and  $b'\in X_t$, we may assume that $u\in C_{b',a}$ and $v\notin C_{b',a}$. 
If $v\in C_{a,b}$ or $v\in C_{b,a'}$, then $a\in X_s$ and  $a'\in X_t$ satisfy $(a,a')\in R$. 
Otherwise, if $v\in C_{a',b'}$, then $b\in X_s$ and  $b'\in X_t$ satisfy $(b,b')\in R$.  In both cases, (1) holds.
\end{proof}

Next, we prove that the generating relation can be simplified as long as it contains $4$ pairwise crossing chords in the right order. 
The lemma after that shows how to obtain such chords from any set of $7$ pairwise crossing chords.

\begin{figure}[b!]
	\centering
	\begin{tikzpicture}[->,>=stealth,yscale=0.63,xscale=0.63]
	\def\centerarc[#1](#2)(#3:#4:#5){ \draw[#1] ([shift=(#3:#5)]#2) arc (#3:#4:#5); }

	\tikzstyle{vv} = [gray!85!yellow,circle,fill,minimum size=1,inner sep=2pt]
	\tikzstyle{rred} = [black!50!red]
	\def\rr{1.6}

	\newcommand\mysubdrawing{
		\draw[black!55!white,thick] (0,0) circle [radius=\rr];
		\node[vv,label=above:$a$] (a) at (2+72+72:\rr) {};
		\node[vv,label=above:$b$] (b) at (2+72+36:\rr) {};
		\node[vv,label=above:$c$] (c) at (2+72:\rr) {};
		\node[vv,label=above:$d$] (d) at (2+36:\rr) {};
		\node[vv,label=below:$u$] (u) at (2-72-72:\rr) {};
		\node[vv,label=below:$z$] (z) at (2-72-36:\rr) {};
		\node[vv,label=below:$y$] (y) at (2-72:\rr) {};
		\node[vv,label=below:$x$] (x) at (2-36:\rr) {};
	}

	\begin{scope}[shift={(-4,0)}]
		\mysubdrawing
		\draw (a)--(x);
		\draw (d)--(u);
		\draw (b)--(z);
		\draw (c)--(y);
		\node at (-180:1) {$R'$};
	\end{scope}
	\begin{scope}[shift={(0,0)}]
		\mysubdrawing
		\draw[very thick, rred] (a)--(x);
		\draw (d)--(u);
		\draw[very thick, rred] (b)--(z);
		\draw (c)--(y);
		\draw[dashed] (b)--(x);
	\end{scope}
	\begin{scope}[shift={(4,0)}]
		\mysubdrawing
		\draw (a)--(x);
		\draw[very thick, rred] (d)--(u);
		\draw (b)--(z);
		\draw[very thick, rred] (c)--(y);
		\draw (b)--(x);
		\draw[dashed] (c)--(u);
	\end{scope}
	\begin{scope}[shift={(8,0)}]
		\mysubdrawing
		\draw (a)--(x);
		\draw (d)--(u);
		\draw (b)--(z);
		\draw[very thick, rred] (c)--(y);
		\draw[very thick, rred] (b)--(x);
		\draw (c)--(u);
		\draw[dashed] (b)--(y);
	\end{scope}
	\begin{scope}[shift={(12,0)}]
		\mysubdrawing
		\draw (a)--(x);
		\draw (d)--(u);
		\draw[very thick, rred] (b)--(z);
		\draw (c)--(y);
		\draw (b)--(x);
		\draw[very thick, rred] (c)--(u);
		\draw (b)--(y);
		\draw[dashed] (c)--(z);
	\end{scope}

\begin{scope}[shift={(0,-0.5)}]
	\begin{scope}[shift={(-4,5)}]
		\mysubdrawing
		\draw (a)--(x);
		\draw (b)--(y);
		\draw (c)--(z);
		\draw (d)--(u);
		\node at (-180:1) {$R$};
	\end{scope}
	\begin{scope}[shift={(0,5)}]
		\mysubdrawing
		\draw (a)--(x);
		\draw[very thick,rred] (b)--(y);
		\draw[very thick,rred] (c)--(z);
		\draw (d)--(u);
		\draw[dashed] (b)--(z);
	\end{scope}
	\begin{scope}[shift={(4,5)}]
		\mysubdrawing
		\draw (a)--(x);
		\draw[very thick,rred] (b)--(y);
		\draw[very thick,rred] (c)--(z);
		\draw (d)--(u);
		\draw (b)--(z);
		\draw[dashed] (c)--(y);
	\end{scope}
	\begin{scope}[shift={(8,5)}]
		\mysubdrawing
		\draw[very thick,rred] (a)--(x);
		\draw[very thick,rred] (b)--(y);
		\draw (c)--(z);
		\draw (d)--(u);
		\draw (b)--(z);
		\draw (c)--(y);
		\draw[dashed] (b)--(x);
	\end{scope}
	\begin{scope}[shift={(12,5)}]
		\mysubdrawing
		\draw (a)--(x);
		\draw (b)--(y);
		\draw[very thick,rred] (c)--(z);
		\draw[very thick,rred] (d)--(u);
		\draw (b)--(z);
		\draw (c)--(y);
		\draw (b)--(x);
		\draw[dashed] (c)--(u);
	\end{scope}
\end{scope}
\end{tikzpicture}
	\caption{Proof of Lemma~\ref{change}: starting from the left $R$ or $R'$, we can add the dashed chord without changing the generated set, due to the thick red ones. Eventually we reach the same $R''$ on the right.}%
	\label{fig:exchange}%
\end{figure}

\begin{lemma}\label{change}
Let $T$ be a finite set of points on a circle and let $R\subseteq T^2$.
Let $a,b,c,d,x,y,z,u\in T$ be pairwise different points that appear in this order on the circle (clockwise or counterclockwise), such that $(a,x),(b,y),(c,z),(d,u)\in R$.
Define 
$$R'=(R\setminus \{(b,y),(c,z)\})\cup\{(b,z),(c,y)\}.$$ 
Then $\clos{R'} = \clos{R}$ and the number of crossings in $R'$ is strictly smaller than in~$R$.
\end{lemma}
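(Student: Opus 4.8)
The plan is to handle the two claims separately; in both, the only structure used is that $a,b,c,d,x,y,z,u$ occur in this cyclic order on the circle, which by symmetry I may assume to be clockwise.

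\emph{Equality of the generated patterns.} The strategy is to complete both $R$ and $R'$ — using only additions licensed by Lemma~\ref{addedge}, which never change the generated pattern — to one and the same relation. Starting from $R$, I would successively add the directed chords $(b,z)$, $(c,y)$, $(b,x)$, $(c,u)$: the first two come from the crossing pair $(b,y),(c,z)\in R$, whose four endpoints appear in cyclic order $b,c,y,z$, so Lemma~\ref{addedge} produces $(b,z)$, and reading the same endpoints in the other rotational direction, $c,b,z,y$, it produces $(c,y)$; the chord $(b,x)$ comes from the crossing pair $(b,y),(a,x)$ and $(c,u)$ from $(c,z),(d,u)$. This gives $\clos{R}=\clos{R^+}$ with $R^+:=R\cup\{(b,z),(c,y),(b,x),(c,u)\}$. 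Starting instead from $R'$, I would add $(b,x)$ from $(a,x),(b,z)\in R'$, then $(c,u)$ from $(d,u),(c,y)\in R'$, then $(b,y)$ from the now-present $(b,x)$ together with $(c,y)$, and finally $(c,z)$ from $(b,z)$ together with the now-present $(c,u)$; this gives $\clos{R'}=\clos{R'\cup\{(b,x),(c,u),(b,y),(c,z)\}}$. Since $R\setminus\{(b,y),(c,z)\}=R'\setminus\{(b,z),(c,y)\}$, the two completions coincide — both equal $\bigl(R\setminus\{(b,y),(c,z)\}\bigr)\cup\{(b,x),(b,y),(b,z),(c,y),(c,z),(c,u)\}$ — and hence $\clos{R}=\clos{R'}$. (This is precisely the sequence of exchanges drawn in Figure~\ref{fig:exchange}.) The only care needed is to read off, each time Lemma~\ref{addedge} is invoked, the correct clockwise/counterclockwise orientation of the four relevant endpoints; once the cyclic order of the eight points is fixed this is routine.

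\emph{Strictly fewer crossings.} I may assume $(b,z),(c,y)\notin R$ (otherwise $R'$ simply has fewer chords and the count is even more favourable). Then $R$ and $R'$ agree except that $R$ contains the chords $\{b,y\}$ and $\{c,z\}$, which cross, whereas $R'$ replaces them by $\{b,z\}$ and $\{c,y\}$, which do \emph{not} cross, since $c$ and $y$ both lie on the $b$--$z$ arc avoiding $a$ and $u$. Writing $R_0=R\setminus\{(b,y),(c,z)\}$ and letting $\mathsf{cr}(\cdot)$ denote the number of crossings, grouping crossings according to how many of the special chords they involve yields
\[
\mathsf{cr}(R)-\mathsf{cr}(R') \;=\; 1 \;+\; \sum_{e\in R_0}\bigl(\,[e\text{ crosses }\{b,y\}]+[e\text{ crosses }\{c,z\}]-[e\text{ crosses }\{b,z\}]-[e\text{ crosses }\{c,y\}]\,\bigr).
\]
I would then verify that every summand is nonnegative by a short case analysis on which of the four arcs cut off by $b,c,y,z$ contain the endpoints of $e$, splitting according to whether $e$ shares $0$, $1$, or $2$ endpoints with $\{b,c,y,z\}$; in fact the summand is $0$ except that it equals $2$ when one endpoint of $e$ lies strictly between $b$ and $c$ and the other strictly between $y$ and $z$, and $1$ in the corresponding one-shared-endpoint situations. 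Hence $\mathsf{cr}(R)-\mathsf{cr}(R')\ge 1>0$.

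\emph{Main obstacle.} The substantive part is the first one: choosing the intermediate chords so that Lemma~\ref{addedge} alone drives \emph{both} $R$ and $R'$ to the same completed relation. The crossing-count part is mechanical; its only real annoyance is the degenerate case in which $R$ already contains $(b,z)$ or $(c,y)$, which I would dispose of with a single remark rather than a separate argument.
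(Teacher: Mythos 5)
Your proposal is correct and follows essentially the same path as the paper's proof: the same intermediate relation $R\cup\{(b,x),(b,z),(c,y),(c,u)\}$, the same eight applications of Lemma~\ref{addedge} (matching the quadruples $(b,c,y,z),(c,b,z,y),(b,a,y,x),(c,d,z,u)$ from $R$ and $(b,a,z,x),(c,d,y,u),(b,c,x,y),(c,b,u,z)$ from $R'$), and the same pointwise comparison of crossings against $\{(b,y),(c,z)\}$ versus $\{(b,z),(c,y)\}$. Your explicit dismissal of the degenerate case $(b,z)\in R$ or $(c,y)\in R$ is a small but welcome bit of extra care that the paper leaves implicit.
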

\begin{proof}
Let $R'':=R\cup \{(b,x),(b,z),(c,y),(c,u)\}$; then $R''=R'\cup \{(b,x),(b,y),(c,z),(c,u)\}$.
By consecutively applying Lemma~\ref{addedge} to $R$ (see Figure~\ref{fig:exchange}) and quadruples
$$(b,c,y,z),\quad (c,b,z,y),\quad (b,a,y,x),\quad (c,d,z,u),$$
we infer that $\clos{R}=\clos{R''}$.
Similarly, by applying Lemma~\ref{addedge} to $R'$ and quadruples
$$(b,a,z,x),\quad (c,d,y,u),\quad (b,c,x,y),\quad (c,b,u,z),$$
we infer that $\clos{R'}=\clos{R''}$. Thus $\clos{R}=\clos{R''}=\clos{R'}$, proving the first claim.

We are left with proving that $R'$ has strictly fewer crossings than $R$.
Since $R'=(R\setminus \{(b,y),(c,z)\})\cup\{(b,z),(c,y)\}$, chords $(b,y)$ and $(c,z)$ cross, and chords $(b,z)$ and $(c,y)$ do not cross, it suffices to prove the following:
every chord $(s,t)$ crosses at most as many chords in the set $\{(b,z),(c,y)\}$ as in the set $\{(b,y),(c,z)\}$.
This assertion can be directly checked by a straightforward case distinction over the positions of $s,t$ with respect to $b,c,y,z$ on the circle.
\end{proof}

\begin{lemma}\label{clique}
Suppose $H$ is a circle graph with directed chords such that $\omega(H)\geq 7$.
Then there are pairwise different points $a,b,c,d,x,y,z,u$ that appear in clockwise order on the circle such that $(a,x),(b,y),(c,z),(d,u)$ are pairwise crossing chords of $H$.
\end{lemma}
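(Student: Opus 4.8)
The plan is to reduce the statement, via a structural normal form for pairwise crossing chords, to a simple pigeonhole argument on the orientations of the $7$ chords.

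First I would isolate the following purely combinatorial fact about any family of $k$ pairwise crossing chords of a circle: since crossing forces all endpoints to be distinct, the family has $2k$ endpoints, and if we list them clockwise as $e_1,\dots,e_{2k}$, then for every $m$ the points $e_m$ and $e_{m+k}$ (indices taken modulo $2k$) are the two ends of one chord of the family; consequently the family is exactly $\{\,\{e_m,e_{m+k}\} : m\in[k]\,\}$. The proof is short: take any endpoint $e_{m'}$ and let $e_\ell$ be the other end of its chord; the chord $\{e_{m'},e_\ell\}$ splits the remaining $2k-2$ endpoints into two arcs, and because every other chord of the family must cross $\{e_{m'},e_\ell\}$, each of the remaining $k-1$ chords has exactly one endpoint in each arc, so the two arcs have $k-1$ endpoints each; hence $e_\ell=e_{m'+k}$.

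Next, applying this with $k=7$ to a clique of size $7$ in $H$, I would label the $14$ endpoints clockwise as $e_1,\dots,e_{14}$, so that the $7$ directed chords of the clique are exactly the chords $C_m$ with underlying endpoints $e_m$ and $e_{m+7}$, for $m\in[7]$. Each $C_m$ has its tail either at $e_m$ (the ``low'' end) or at $e_{m+7}$ (the ``high'' end). By pigeonhole at least $\lceil 7/2\rceil=4$ of the chords have their tails on the same side; fix four of them, say $C_{j_1},C_{j_2},C_{j_3},C_{j_4}$ with $j_1<j_2<j_3<j_4$. If all four tails are at the low end, I set $(a,x):=C_{j_1}$, $(b,y):=C_{j_2}$, $(c,z):=C_{j_3}$, $(d,u):=C_{j_4}$ as directed pairs, i.e. $a=e_{j_1}$, $x=e_{j_1+7}$, and similarly for the rest. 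From $j_1<j_2<j_3<j_4\le 7<j_1+7<j_2+7<j_3+7<j_4+7\le 14$ the eight chosen points occur in the clockwise order $a,b,c,d,x,y,z,u$, and the four chords are pairwise crossing since they are four members of the original clique. If instead all four tails are at the high end, I set $a=e_{j_1+7}$, $b=e_{j_2+7}$, $c=e_{j_3+7}$, $d=e_{j_4+7}$ and $x=e_{j_1}$, $y=e_{j_2}$, $z=e_{j_3}$, $u=e_{j_4}$; scanning the circle clockwise from $e_{j_1+7}$ one meets $e_{j_2+7},e_{j_3+7},e_{j_4+7}$ and then, wrapping past $e_{14}$, the points $e_{j_1},e_{j_2},e_{j_3},e_{j_4}$, so again $a,b,c,d,x,y,z,u$ are in clockwise order, and now $(a,x)=C_{j_1},\dots,(d,u)=C_{j_4}$ are correctly oriented because the tail of each sits at the high end.

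I expect the structural fact to be the only real content; after it, the argument is bookkeeping. The one spot demanding care is the ``high end'' case, where the correct clockwise order is visible only after wrapping around the circle, and one must double-check that tails and heads still match the required pattern $(a,x),(b,y),(c,z),(d,u)$. It is also worth noting where the bound $7$ is used: we need four equally oriented chords among a half/half split, and $\lceil 7/2\rceil=4$, whereas $\lceil 6/2\rceil=3$ would not suffice.
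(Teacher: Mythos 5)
Your proof is correct, and it takes a genuinely different route from the paper's. You first prove a clean structural normal form for any family of $k$ pairwise crossing chords — with endpoints labelled $e_1,\ldots,e_{2k}$ in clockwise order, the chords are exactly the ``antipodal'' pairs $\{e_m,e_{m+k}\}$ — and then apply the pigeonhole principle to the orientations of the $7$ chords directly (four of them point the same way, $\lceil 7/2\rceil=4$). The paper instead fixes a single directed chord $(a,b)$ from the clique, uses it to split the circle into two arcs, and pigeonholes the \emph{other six} chords by which arc contains the tail ($\lceil 6/2\rceil + 1 = 4$, adding the fixed chord back in); it then needs a short case analysis to verify that the four resulting chords pair the $i$-th tail with the $i$-th head in clockwise order. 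Your normal form makes that verification automatic, which is a nice simplification; the price is establishing the normal form up front, but that argument (each remaining chord contributes one endpoint per side of the splitting chord, forcing the other end to be $k$ steps away) is short and reusable. Both arguments correctly pinpoint $7$ as the threshold for the pigeonhole to yield $4$ consistently oriented chords.
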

\begin{proof}
We first show that there is a clique $V_1$ of four chords and a partition of the circle into two arcs $X_1, X_2$ such that each of the four chords has tail in $X_1$ and head in $X_2$.
Let $C$ denote the circle and $V_2$ be a clique in $H$ with 
$|V_2|=7.$ Choose any chord $(a,b)\in V_2$. After removing points $a$ and $b$, the circle $C$ breaks into 
two disjoint open arcs $C_1,C_2$ with $a,b$ as endpoints.
Every other chord of $(x,y)\in V_2$ crosses the chord $(a,b)$, hence exactly one of the endpoints of $(x,y)$ belongs
to the arc $C_1$, and the other belongs to the arc $C_2$. By the pigeonhole principle, some
three of the six chords in $V_2 \setminus \{(a,b)\}$ have their tail in the same arc $C_i$, for some $i \in \{1,2\}$. Define $V_1$ to be the set of these three chords together with $(a,b)$.
Then the (open-closed) arcs $X_1 := C_i \cup \{a\}$ and $X_2 := C_{3-i} \cup \{b\}$ have the property that each chord in $V_1$ has tail in $X_1$ and head in $X_2$.

Let $a,b,c,d\in X_1$ be the tails of chords in $V_1$, in the clockwise order on $C$. Similarly let $x,y,z,u\in X_2$ be the heads of chords in $V_1$, in the clockwise order on $C$. 
We claim that $V_1=\{(a,x),(b,y),(c,z),(d,u)\}$, which will conclude the proof.

Suppose $(a,\zeta)\in V_1$ for some $\zeta \in \{y,z,u\}$. Then
$(\eta,x)\in V_1$ for some 
$\eta \in \{b,c,d\}$. 
Since $V_1$ is a clique, the chords $(\eta,x)$ and $(a,\zeta)$ have to cross. But this contradicts that $a,\eta,x,\zeta$ appear in this clockwise order on the circle. Therefore $(a,x)\in V_1$, and symmetrically
$(d,u)\in V_1$. Now 
either $(b,y),(c,z)\in V_1$ or $(b,z),(c,y)\in V_1$.
The latter is impossible, since  $(b,z),(c,y)$ would not cross, hence $(b,y),
(c,z)\in V_1$. 
\end{proof}

Lemmas~\ref{change} and~\ref{clique} allow us to conclude that any generating relation can be iteratively simplified until it contains no set of $7$ pairwise crossing chords.

\begin{lemma}\label{less6}
Let $G$ be a planar graph drawn in a disk $\Delta$, let $T$ be a subset of vertices of $G$ drawn on the boundary of $\Delta$, and let $P\subseteq T^2$ be the connectivity pattern on $T$ induced by $G$. 
Then there exists a relation $R\subseteq T^2$ such that
$\clos{R} = P$ and the circle graph induced by $R$ has clique number at most $6$.
\end{lemma}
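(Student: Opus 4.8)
The plan is to start from $R:=P$ itself and iteratively reduce the number of crossings in the induced circle graph while preserving the generated pattern, stopping once no set of seven pairwise crossing chords remains.

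First, by Lemma~\ref{pgenerp} we have $\clos{P}=P$, so the choice $R:=P$ satisfies $\clos{R}=P$; this equality will be maintained as an invariant throughout the process. Alongside it I would keep track of the number of crossings of $R$, i.e.\ the number of edges of the circle graph induced by $R$, which serves as a termination measure.

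The main loop is as follows. As long as the circle graph induced by the current relation $R$ has clique number at least $7$, it contains seven pairwise crossing chords, so Lemma~\ref{clique} applies and yields eight pairwise distinct points $a,b,c,d,x,y,z,u$ occurring in clockwise order on the circle such that $(a,x),(b,y),(c,z),(d,u)$ are pairwise crossing chords belonging to $R$. These are precisely the hypotheses of Lemma~\ref{change}, which I then invoke with this octuple to replace $R$ by $R'=(R\setminus\{(b,y),(c,z)\})\cup\{(b,z),(c,y)\}$. Lemma~\ref{change} guarantees both $\clos{R'}=\clos{R}$ (so the invariant $\clos{R'}=P$ is preserved) and that $R'$ has strictly fewer crossings than $R$; moreover $R'\subseteq T^2$ since $b,c,y,z\in T$. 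Since the crossing count is a nonnegative integer that strictly decreases at each iteration, the loop terminates after finitely many steps, and upon termination the current relation $R\subseteq T^2$ satisfies $\clos{R}=P$ while the circle graph it induces has clique number at most $6$ — exactly the claimed conclusion.

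I do not expect a serious obstacle: essentially all the content is already packed into Lemmas~\ref{pgenerp}, \ref{change}, and \ref{clique}, and the remaining work is only bookkeeping — checking that the output of Lemma~\ref{clique} (pairwise distinct points in the prescribed cyclic order, with the four chords in the relation) matches the input required by Lemma~\ref{change}, and that the crossing-count measure is finite and strictly monotone. The only point deserving a word of care is that the trigger $\omega(H)\geq 7$ in Lemma~\ref{clique} is what forces the final bound to be $6$; if the loop instead ran only while $\omega\geq 8$, we would get a weaker bound.
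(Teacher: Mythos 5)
Your proposal is correct and follows essentially the same route as the paper: both rest entirely on Lemmas~\ref{pgenerp}, \ref{clique}, and \ref{change}, and the only presentational difference is that you phrase it as an explicit loop that terminates because the (nonnegative, integer) crossing count strictly decreases, while the paper takes the extremal shortcut of choosing among all generating relations one that minimizes the number of crossings and then deriving a contradiction from Lemma~\ref{change}. One small bookkeeping point the paper handles that you omit: the initial relation $R=P$ is reflexive, so it contains loops $(s,s)$, which are not chords; the paper discards these first (they do not affect $\clos{\cdot}$) so that $R$ is genuinely a set of directed chords to which the circle-graph definitions and Lemmas~\ref{clique} and \ref{change} literally apply. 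Adding that one-line normalization would make your argument airtight.
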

\begin{proof}
By Lemma \ref{pgenerp} there exists a relation $R\subseteq T^2$ (namely $R=P$) such that $\clos{R}=P$.
Choose $R$ such that $\clos{R} = P$ and the number of crossings in $R$ is as small as possible.
Without loss of generality assume that $R$ does not contain pairs of the form $(s,s)$ for $s\in T$, as such pairs may be removed without changing the generated relation; 
thus $R$ is a set of directed chords with endpoints in $T$.
Let $\omega$ be the clique number of the circle graph induced by $R$.
If $\omega\leq 6$ we are done, so suppose $\omega\geq 7$.  By Lemma~\ref{clique}, there are pairwise different points $a,b,c,d,x,y,z,u$ that appear in clockwise order on the circle such that 
$(a,x),(b,y),(c,z),(d,u)\in R.$ 
Define 
$$
R'=(R\setminus \{(b,y),(c,z)\})\cup\{(b,z),(c,y)\}.
$$ 
By Lemma~\ref{change}, $\clos{R'}=\clos{R} = P$ and $R'$ has fewer crossings than $R$, a contradiction.
\end{proof}

Having obtained a generating relation with no large set of pairwise crossing chords, we will later partition it into a small number of sets of pairwise non-crossing chords 
using the $\chi$-boundedness of circle graphs (Theorem \ref{Gyarfas}). First, however, we bound the number of such non-crossing sets as follows.

\begin{lemma}\label{disjoint}
Let $T$ be a finite set of points on a circle.
Then every set of pairwise non-crossing chords with endpoints in $T$ has at most $2|T|-3$ chords, and
there are $2^{\Oh(|T|)}$ different such sets.
\end{lemma}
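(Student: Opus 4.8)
The plan is to prove the two assertions in turn. For the first, a set $R$ of pairwise non-crossing chords with endpoints in $T$ naturally defines a plane graph: draw each chord as a straight segment inside the disk, so that (since no two chords cross) the only shared points are common endpoints in $T$. Thus $R$ together with the boundary circle gives a plane (multi)graph on the vertex set $T$, with $|T|$ vertices and $|T|+|R|$ edges (the $|T|$ arcs of the circle between consecutive points of $T$, plus the $|R|$ chords). By Euler's formula, a simple plane graph on $|T|\ge 3$ vertices has at most $3|T|-6$ edges, whence $|T|+|R|\le 3|T|-6$, i.e.\ $|R|\le 2|T|-6$; a slightly more careful count (or allowing for the degenerate cases $|T|\le 2$ and for multi-edges arising when a chord joins two consecutive points of $T$) yields the stated bound $|R|\le 2|T|-3$. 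One small subtlety I would handle explicitly: chords are \emph{unordered} pairs here or possibly directed pairs, and two chords with the same endpoints do not ``cross'' under the definition given; so $R$ may contain parallel chords, and one should either argue there can be at most one chord per pair of points after a perturbation argument, or absorb the doubling into the additive constant. Either way the linear bound in $|T|$ holds.

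For the counting assertion, the cleanest route is to bound the number of maximal such sets and note that each non-crossing set is a subset of a maximal one. A maximal set of pairwise non-crossing chords on $|T|$ points on a circle is exactly a triangulation of the convex polygon on those $|T|$ vertices (together with its boundary edges, which are the circle arcs); the number of triangulations of a convex $|T|$-gon is the Catalan number $C_{|T|-2}=\frac{1}{|T|-1}\binom{2|T|-4}{|T|-2}=2^{\Oh(|T|)}$. Each triangulation has exactly $|T|-3$ interior diagonals, so it has $2^{|T|-3}$ subsets; hence the total number of non-crossing chord sets is at most $C_{|T|-2}\cdot 2^{|T|-3}=2^{\Oh(|T|)}$. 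Alternatively, and avoiding any appeal to the enumeration of triangulations, one can bound the number of non-crossing sets of size exactly $m$ directly: choosing $m$ pairwise non-crossing chords is the same as choosing a non-crossing perfect matching / forest structure, and in any case there are at most $\binom{\binom{|T|}{2}}{m}$ candidates with $m\le 2|T|-3$, which already is $2^{\Oh(|T|\log|T|)}$ — too weak. So the triangulation (Catalan) argument is the one to use, since it is what actually delivers the single-exponential bound.

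The main obstacle is purely bookkeeping: making sure the Euler-formula count and the triangulation count are stated for the right objects. Concretely, I must fix a convention for whether chords are directed (they are, in the context where this lemma is applied, but directedness only multiplies the count by $2^{|R|}\le 2^{2|T|-3}$, still $2^{\Oh(|T|)}$) and for whether endpoints may coincide with the arc-endpoints of consecutive points of $T$ (which would create parallel edges in the plane-graph picture). Both issues are resolved by the same perturbation already used earlier in the paper: move points slightly so that no chord is a boundary arc, argue the count is unchanged up to the additive/multiplicative constants, and then apply Euler and Catalan to the clean object. No deep input is needed here; the lemma is a self-contained planar-combinatorics fact, and the only ``idea'' is to recognize maximal non-crossing families as polygon triangulations.
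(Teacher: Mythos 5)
Your approach is essentially the same as the paper's: identify pairwise non-crossing chord sets with outerplanar graphs on $T$, bound the number of edges, and count by observing that every such set is a subgraph of a maximal one (a polygon triangulation), whose number is the Catalan number $C_{|T|-2}$.

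There is, however, one arithmetic slip in the counting step. You take $2^{|T|-3}$ subsets per triangulation, counting only subsets of the $|T|-3$ interior diagonals. But a ``chord'' here may join two consecutive points of $T$ (the definition permits any unordered pair of distinct points), so a non-crossing chord set can contain polygon sides. For example with $|T|=4$, the singleton $\{\{1,2\}\}$ is a valid non-crossing chord set yet is not a subset of the interior diagonals of any triangulation, so $C_2\cdot 2^1=4$ is not a valid upper bound. The fix is exactly what the paper does: a maximal outerplanar graph has $2|T|-3$ edges total (sides plus diagonals), so each triangulation contributes at most $2^{2|T|-3}$ subsets, giving $C_{|T|-2}\cdot 2^{2|T|-3}=2^{\Oh(|T|)}$. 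The final asymptotic conclusion survives, but the intermediate bound as you wrote it is false. Similarly, your Euler count via ``chords plus boundary arcs'' is cleaner if you avoid the multi-edge issue altogether, e.g.\ by adding an apex vertex in the outer face adjacent to all of $T$ and applying Euler's formula to that simple planar graph; this directly yields the $2|T|-3$ bound without case distinctions.
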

\begin{proof}
Let $|T|=n$. 
Observe that any set of pairwise non-crossing chords with endpoints in $T$ corresponds to an outerplanar graph with $T$ as vertices and chords as edges.
It is well known that the number of edges in an outerplanar graphs on $n$ vertices is at most $2n-3$, and the number of maximal outerplanar graphs (polygonal triangulations) on $n$ vertices 
is the Catalan number $C_{n-2}$.
Any outerplanar graph on $n$ vertices can be obtained as a subgraph of a maximal one, hence their number is bounded by $C_{n-2} \cdot 2^{2n-3} \leq 4^{n} \cdot 2^{2n-3} = 2^{\Oh(n)}$.
\end{proof}

We remark that in the proof above, the exact number of possibilities can also be characterized in terms of little Schr\"{o}der numbers $s_{n-2}$ (also known as Schr\"{o}der-Hipparchus numbers), 
which count the number of dissections of a polygon with $n$ sides by non-crossing diagonals~\cite{oeisA001003,stanley1997hipparchus}.
Since there are $2^n$ ways to choose a subset of non-diagonals (sides of the polygon), the exact number is $s_{n-2} \cdot 2^n$.


We are now ready to prove the main theorem of this section.

\newcommand{\chimax}{\chi_{\mathrm{max}}}

\begin{theorem}\label{thm:patterns}
Let $T$ be a set of $n$ points on the boundary of a closed disk $\Delta$.
There exists a family $\mathcal{R}$ of relations $R\subseteq T^2$ such that $|\mathcal{R}| = 2^{\Oh(n)}$ and the following property is satisfied.
For every planar digraph $G$ drawn in $\Delta$ such that $T\subseteq V(G)$, 
the connectivity pattern induced by $G$ on $T$ is generated by some relation in $\mathcal{R}$.
\end{theorem}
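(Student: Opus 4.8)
The plan is to combine all the lemmas of this section. By Lemma~\ref{less6}, for every planar digraph $G$ drawn in $\Delta$ there is a relation $R_G\subseteq T^2$ such that $\clos{R_G}$ equals the connectivity pattern induced by $G$ on $T$, and moreover the circle graph induced by $R_G$ has clique number at most $6$. So it suffices to bound the number of relations $R\subseteq T^2$ whose induced circle graph has clique number at most $6$; call such relations \emph{simple}. Every simple relation will be shown to lie in a family $\Rr$ of size $2^{\Oh(n)}$, and then $\mathcal{R}$ can be taken to be this family (each member generates the connectivity pattern of any $G$ for which it arises as $R_G$).

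The key step is to decompose a simple relation into a bounded number of non-crossing pieces. Fix a simple relation $R$ and consider the circle graph $H$ it induces, so $\omega(H)\le 6$. By Theorem~\ref{Gyarfas} ($\chi$-boundedness of circle graphs) there is an absolute constant $\chimax := f(6)$ such that $\chi(H)\le \chimax$; here $f$ is the Gy\'{a}rf\'{a}s bounding function. Fix a proper coloring of $H$ with at most $\chimax$ colors. Each color class is an independent set in $H$, i.e.\ a set of pairwise non-crossing directed chords with endpoints in $T$. Hence $R$ is partitioned into at most $\chimax$ sets $R_1,\dots,R_{\chimax}$ of pairwise non-crossing chords. (If $R$ uses fewer than $\chimax$ colors we may pad with empty sets, so we can always assume exactly $\chimax$ pieces.)

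Now we count. By Lemma~\ref{disjoint}, the number of sets of pairwise non-crossing \emph{undirected} chords with endpoints in $T$ is $2^{\Oh(n)}$; since each such set has at most $2n-3$ chords, choosing an orientation for each chord multiplies the count by at most $2^{2n-3}$, so the number of sets of pairwise non-crossing \emph{directed} chords is still $2^{\Oh(n)}$. A simple relation is obtained by choosing an ordered tuple of $\chimax$ such directed non-crossing sets and taking their union; the number of such tuples is $\bigl(2^{\Oh(n)}\bigr)^{\chimax} = 2^{\Oh(n)}$, because $\chimax$ is an absolute constant. Let $\Rr$ be the set of all relations obtainable this way. Then $|\Rr| = 2^{\Oh(n)}$, every simple relation belongs to $\Rr$, and in particular for every planar digraph $G$ drawn in $\Delta$ the relation $R_G$ from Lemma~\ref{less6} lies in $\Rr$ and generates the connectivity pattern induced by $G$ on $T$. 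This proves the theorem.

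I do not anticipate a serious obstacle here: all the substantive work has been done in Lemmas~\ref{pgenerp}--\ref{disjoint}, and what remains is to invoke $\chi$-boundedness to get the constant-size color decomposition and then multiply the counts. The only mild care needed is in making sure the orientation bookkeeping in the directed version of Lemma~\ref{disjoint} does not blow up the count beyond $2^{\Oh(n)}$, which it does not since the number of chords in each non-crossing piece is $\Oh(n)$.
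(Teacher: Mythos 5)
Your proof is correct and follows essentially the same approach as the paper: apply Lemma~\ref{less6} to reduce to relations with clique number at most $6$, invoke $\chi$-boundedness (Theorem~\ref{Gyarfas}) to decompose into a constant number of non-crossing sets, and count via Lemma~\ref{disjoint} together with the $2^{\Oh(n)}$ orientation bound. The only cosmetic difference is that the paper defines $\mathcal{R}$ directly as all relations of clique number at most $6$ and then bounds its size by this decomposition, whereas you define $\mathcal{R}$ as all $\chimax$-fold unions of non-crossing directed chord sets (a superset), but both yield the same $2^{\Oh(n)}$ bound.
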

\begin{proof}
Denote by $\mathcal{R}$ the family of all sets of directed chords $R\subseteq T^2$ such that the circle graph induced by $R$ has clique number at most $6$.
By Lemma~\ref{less6} this family satisfies the claimed property and it remains to bound its size.

By $\chi$-boundedness of circle graphs (Theorem~\ref{Gyarfas}), there exists a number $\chimax$ such that for $R\in \mathcal{R}$, 
the chromatic number of the circle graph induced by $R$ is at most $\chimax$.
The chords of any circle graph induced by some $R\in\mathcal{R}$ can thus be partitioned into $\chimax$ sets (possibly empty) such that no two chords in the same set cross.
By Lemma~\ref{disjoint}, the number of possibilities to choose such a set of undirected, pairwise non-crossing chords is $2^{\Oh(n)}$,
and any such set contains at most $2n-3$ chords. Hence there are at most $2^{2n-3}$ possibilities to orient these chords.
We conclude that indeed $|\mathcal{R}| \leq (2^{\Oh(n)} \cdot 2^{2n-3})^{\chimax} = 2^{\Oh(n)}$.
\end{proof}

\section{Dynamic programming on planar digraphs}\label{sect:dynamic}
\newcommand{\med}{\mathsf{med}}
\newcommand{\conn}{\mathsf{conn}}
\newcommand{\Val}{\mathsf{Val}}
\newcommand{\bw}{\text{bw}}

In this section we prove Theorem~\ref{thm:main-algo}, that is, we describe the algorithm for \ProblemName{DFVS} on planar directed graphs running in time $2^{\Oh(t)}\cdot n^{\Oh(1)}$.

\subparagraph*{Branch decompositions.}
We use standard terminology for rooted trees: children, parent, etc. Moreover,
if $x$ is a node of a rooted tree, $e$ is the edge connecting $x$ to its parent, whereas $e'$ is an edge connecting $x$ to one of its children, then $e'$ is called a {\em{child}} of $e$.

For a graph $G$, a \emph{(rooted) branch decomposition} of $G$ is a pair $(T,\eta)$ where $T$ is a tree with a root 
$r$ of degree 1 and all internal vertices of degree 3, and $\eta$ is a bijection from the edge set of $G$ to the set of leaves of $T$ (the root is not considered a leaf).
For an edge $e \in E(T)$ of the decomposition, removing $e$ from $T$ splits the tree into two subtrees; let $T_{\uparrow}(e)$ be the subtree that contains the root, and let $T_{\downarrow}(e)$ be the other subtree.
This naturally gives rise to a partition $E_{\uparrow}(e), E_{\downarrow}(e)$ of the edge set of $G$, where $E_{\uparrow}(e)$, resp. $E_{\downarrow}(e)$, comprises edges mapped by $\eta$ to leaves lying in 
$T_{\uparrow}(e)$, resp. $T_{\downarrow}(e)$.
Note that both sides of this partition are non-empty unless $e$ is the edge incident to the root.
We define $\med(e) \subseteq V(G)$ as the set of vertices incident to both an edge in $E_{\uparrow}(e)$ and $E_{\downarrow}(e)$.
Note that $\med(e)$ is non-empty unless $e$ is the edge incident to the root.
Let $G(e)$ be the subgraph of $G$ containing edges of $E_{\downarrow}(e)$ and their endpoints.
The \emph{width} of the decomposition $(T,\eta)$ is $\max_{e \in E(T)} |\med(e)|$.
The \emph{branchwidth} of $G$ is the minimum width of a branch decomposition of $G$.
It is well known that the branchwidth of a graph is at least its treewidth minus $1$ and at most $\frac{3}{2}$ times its treewidth~\cite{RobertsonS91}.
Hence, it suffices to prove Theorem~\ref{thm:main-algo} with the treewidth of the input graph replaced with its branchwidth.

\subparagraph*{Sphere-cut decompositions.}Let $G$ be a graph embedded on a sphere (a plane graph).
A {\em{noose}} of $G$ is a Jordan curve (i.e., closed, non self-crossing) on the sphere that intersects $G$ only in vertices and visits each face at most once.
A \emph{sphere-cut decomposition} (\emph{sc-decomposition}, for short) of $G$ is a branch decomposition $(T,\eta)$ of $G$ such that for each $e \in E(T)$, 
there is a closed disk $\Delta(e)$ on the sphere such that $\Delta(e) \cap G = G(e)$ and $\partial \Delta(e)$---the boundary of the disk $\Delta(e)$---is a noose of $G$ with $\partial \Delta(e) \cap G = \med(e)$.
Moreover, for an edge $e$ of $T$ with children $e_1, e_2$, the interiors of disks $\Delta(e_1)$ and $\Delta(e_2)$ are disjoint and $\Delta(e_1) \cup \Delta(e_2) = \Delta(e)$.

We use the following results on the existence of sc-decompositions.
It follows from the results of Seymour and Thomas~\cite{SeymourT94} that every sphere-embedded graph that is connected and bridgeless admits a sphere-cut decomposition of width equal to its branchwidth.
This was observed and used by Dorn et al.~\cite{DornPBF10}; see also Marx and Pilipczuk~\cite{MarxP15} for a slightly corrected explanation.

All the works mentioned above use non-rooted decompositions, however these can be rooted by subdividing an arbitrary edge and creating a root as a new node attached to the middle vertex of the subdivided edge.
The definition of branchwidth in particular is unchanged.
We also remark that the definition of a sphere-cut decomposition in the works above relies only on the existence of nooses that meet the embedded graph $G$ at vertex subsets $\med(e)$, 
for edges $e$ of the decomposition, and does not assert the properties of disks $\Delta(e)$ that we imposed in our definition.
However, the satisfaction of these properties is implicit in~\cite{FominT06,MarxP15}; this follows from the fact that nooses may be treated as simple cycles in the radial graph~\cite{MarxP15}.

Moreover, it is known that an optimum-width sc-decomposition of a planar graph can be computed in polynomial time.
The original running time of $\Oh(n^4)$ given by Seymour and Thomas~\cite{SeymourT94} has been improved to $\Oh(n^3)$ by Gu and Tamaki~\cite{GuT08}. 
See also Bian et al.~\cite{BianGZ16} for experimental results.
We summarize all these results in the following theorem.

\begin{theorem}[\cite{DornPBF10,GuT08,MarxP15,SeymourT94}]\label{thm:planarScDecomposition}
Let $G$ be a sphere-embedded, connected, bridgeless graph of branchwidth $b$ on $n$ vertices. Then an sc-decomposition of $G$ of width $b$ exists and can be found in $\Oh(n^3)$ time.
\end{theorem}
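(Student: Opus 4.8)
The statement is a repackaging of results already in the literature, so the plan is to assemble three ingredients and check that together they yield exactly the formulation we need: (i) the existence of an optimum-width branch decomposition realized by nooses, (ii) the disk structure imposed in our definition of a sphere-cut decomposition, and (iii) the cubic-time bound. The only genuinely technical point will be (ii); the rest is a direct appeal to the cited theorems plus the routine rooting step noted above.

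For (i), I would invoke the branchwidth machinery of Seymour and Thomas~\cite{SeymourT94}. For a connected, bridgeless plane graph $G$, branchwidth is characterized via the ``ratcatcher'' call-routing game played on the medial (equivalently, radial) graph, and the optimum branch decomposition extracted from it has the feature that every decomposition-tree edge $e$ corresponds to a noose of $G$: a Jordan curve meeting $G$ exactly in $\med(e)$, visiting each face at most once, and separating $E_{\downarrow}(e)$ from $E_{\uparrow}(e)$. This is precisely what Dorn et al.~\cite{DornPBF10} isolated under the name sphere-cut decomposition, observing that the Seymour--Thomas output already has width equal to the branchwidth. Connectivity and bridgelessness are exactly the hypotheses needed for such a family of nooses to exist (a bridge cannot be split off by a noose with both sides nonempty and touching $G$ only in vertices).

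For (ii), I would make the disk structure explicit, following the corrected account of Marx and Pilipczuk~\cite{MarxP15}. Pass to the radial graph $R_G$ --- the bipartite vertex--face incidence graph, drawn on the sphere --- in which a noose of $G$ becomes a simple cycle. By the Jordan curve theorem such a cycle bounds two closed disks; take $\Delta(e)$ to be the one whose interior contains $G(e)$. One then checks $\Delta(e)\cap G = G(e)$ and $\partial\Delta(e)\cap G = \med(e)$ directly. The nesting $\Delta(e_1)\cup\Delta(e_2)=\Delta(e)$ with disjoint interiors, for children $e_1,e_2$ of $e$, follows from the laminar structure of the corresponding cycles in $R_G$, which in turn comes from the partition of $E_{\downarrow}(e)$ into the disjoint parts $E_{\downarrow}(e_1)$ and $E_{\downarrow}(e_2)$. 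I expect this to be the main obstacle, because the original sources phrase everything in terms of nooses and leave the disk statement implicit; spelling out the Jordan-curve argument on $R_G$ and verifying laminarity is the one place where care is required.

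For (iii), I would quote Gu and Tamaki~\cite{GuT08}, who sharpened the $\Oh(n^4)$ bound of Seymour and Thomas to $\Oh(n^3)$ for computing an optimum-width branch decomposition of a planar graph; converting that output into the noose/disk form of (i)--(ii) adds only polynomial overhead absorbed in $\Oh(n^3)$. Finally, the decomposition so obtained is unrooted with all internal degrees $3$; to match the rooted definition used here, subdivide an arbitrary edge and attach a new degree-$1$ root to the subdivision vertex. This changes neither the width nor any disk property: the new root edge $e_r$ simply has $\med(e_r)=\emptyset$ and $\Delta(e_r)$ equal to the whole sphere. Assembling (i), (ii), (iii) and this rooting step gives the theorem.
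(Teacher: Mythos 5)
Your proposal matches the paper's own treatment: both assemble the same cited ingredients — Seymour--Thomas for optimum-width branch decompositions via nooses, Dorn et al.\ for identifying this as a sphere-cut decomposition, Marx--Pilipczuk for the radial-graph/disk bookkeeping, Gu--Tamaki for the $\Oh(n^3)$ bound — together with the routine rooting step by edge subdivision. The paper simply states these references without unpacking the Jordan-curve and laminarity details you sketch in (ii), but that is precisely the content the paper delegates to \cite{MarxP15}, so there is no substantive difference.
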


\subparagraph*{Connectivity patterns and joins.}
Suppose $M$ is a set of vertices on the boundary of a closed disk $\Delta$. Let $\conn(M)$ be the family of all connectivity patterns that are induced by digraphs $H$ embedded in $\Delta$
with $M \subseteq \partial \Delta \cap H$. By Theorem~\ref{thm:patterns}, $\conn(M)$ has size at most $2^{\Oh(|M|)}$.

Fix a sphere-embedded directed graph $G$ and any its sc-decomposition.
For an edge $e$ of the sc-decomposition
and any set $M\subseteq \med(e)$,
when we write $\conn(M)$, we mean $\conn(M)$ as defined above for the vertices $\med(e)$ embedded on the boundary of the disk $\Delta(e)$.

For an edge $e$ of the sc-decomposition with children $e_1,e_2$, a vertex subset $X \subseteq V(G)$, and connectivity patterns $P_i \in \conn(\med(e_i) \setminus X)$ for $i=1,2$, 
we define the \emph{join} of $P_1$ and $P_2$ as the connectivity pattern in $\conn(\med(e) \setminus X)$ obtained as follows.
For $i=1,2$ define the directed graph $Q_i=(\med(e_i) \setminus X, P_i)$ (i.e. vertices are $\med(e_i) \setminus X$ and edges are pairs in $P_i$),
take the union of $Q_1$ and $Q_2$ (note that $Q_1$ and $Q_2$ share vertices from $(\med(e_1)\cap \med(e_2))\setminus X$),
and define the join of $P_1$ and $P_2$ to be the connectivity pattern induced by this union on $\med(e) \setminus X$, provided it is acyclic.
If the union contains directed cycles (other than loops), we say that $P_1$ and $P_2$ have no join.
By the definition of a branch decomposition we have that $\med(e) \subseteq \med(e_1) \cup \med(e_2)$, so this is notion is well defined.

The crucial property of joins, which follows directly from the properties of disks defined in sc-decompositions, is that for a subset $X$ of vertices or edges of $G$ such that $G(e_i) - X$ is acyclic (for $i=1,2$), the
following assertion holds:
The connectivity patterns induced by $G(e_i) - X$ on $\med(e_i) \setminus X$ have a join if and only if $G(e) -X$ is acyclic, and
in this case the join is equal to the connectivity pattern induced by $G(e) - X$ on $\med(e) \setminus X$.

\subparagraph*{Dynamic programming over sc-decompositions.}
Consider an instance $G$ of \ProblemName{DFVS}, cast as an optimization problem.
We can assume without loss of generality that $G$ is weakly connected and has no bridges, since 
removing a bridge does not change the optimum, while the optimum for a disconnected graph is the sum of the optima for its weakly connected components.
Let $b$ be the branchwidth of $G$.
By Theorem~\ref{thm:planarScDecomposition}, in polynomial time we can compute an sc-decomposition $(T,\eta)$ of $G$ of width $b$, which we fix from now on.

\medskip

We now describe the dynamic programming table.
For each $e \in E(T)$,  $X \subseteq \med(e)$, and $P \in \conn(\med(e)\setminus X)$,
define $\Val[e,X,P]$ as the minimum size of a vertex subset $S\subseteq V(G(e)) \setminus \med(e)$ such that $G_e - (X\cup S)$ 
is acyclic and induces the connectivity pattern $P$ on $\med(e)\setminus X$.
If no such set $S$ exists, define $\Val[e,X,P]$ as $\infty$.

The dynamic programming algorithm will compute all values of $\Val$ in a bottom-up manner.
It then suffices to return this value at the edge $e$ incident to the root: 
$\Val[e,\emptyset,\emptyset]$ is equal to the sought optimum for $G(e) = G$ (recall that $\med(e)=\emptyset$).

We remark that Theorem~\ref{thm:patterns} does not provide us any algorithm to enumerate all connectivity patterns in $\conn(\med(e)\setminus X)$, however this is not a problem.
Namely, in the algorithm we will store only those entries of $\Val$ that are different from $\infty$, together with connectivity patterns for which they are achieved, and for all the 
connectivity patterns that are not listed the value will be $\infty$. At any step of the computation we make sure that all the connectivity patterns from $\conn(\med(e)\setminus X)$
for which the relevant value $\Val$ is finite are explicitly constructed based on the connectivity patterns computed in the previous steps. Thus, it is not necessary to construct the
whole set $\conn(\med(e)\setminus X)$ in advance. We explain this formally at the end of the proof.

\medskip

We proceed to the description of the computation of the table entries, starting with decomposition edges incident to leaves of $T$.
For an edge $e$ of $T$ incident to a leaf corresponding (via $\eta$) to an edge $(u,v)$ of $G$, the value $\Val[e,\cdot,\cdot]$ can be easily computed as follows.
Namely, $\Val[e,X,P]$ equals $0$ when $X=\emptyset$ and $P$ is the pattern induced by $G(e)$, or when $X\neq \emptyset$ and $P$ is the only reflexive relation on $\{u,v\}\setminus X$, and
equals $\infty$ otherwise.

It remains to describe how to compute the entries for an edge $e$ after having computed the values at its children $e_1$ and $e_2$.
Let $P \in \conn(\med(e))$.
Then it is straightforward to see that $\Val[e,X,P]$ is equal to the minimum of $\Val[e_1,X_1,P_1] + \Val[e_2,X_2,P_2] + |Y|$ over all
$Y \subseteq (\med(e_1) \cup \med(e_2)) \setminus \med(e)$ and all $P_i \in \conn(\med(e_i)\setminus X_i)$ ($i=1,2$),
where $X_i := (X \cup Y) \cap \med(e_i)$, such that and $P$ is the join of $P_1$ and $P_2$. 

Indeed, let $A$ be the minimum defined above.
Take any subset $S$ of $V(G(e))\setminus \med(e)$ such that $G(e)-(X\cup S)$ is acyclic and induces the connectivity pattern $P$.
Let $Y=S\cap (\med(e_1) \cup \med(e_2)) \setminus \med(e))$, 
and let $P_i$ be the connectivity pattern on $\med(e_i)\setminus (X\cup Y)$ induced by $G(e_i)-(X\cup Y\cup S_i)$, for $i=1,2$, where $S_i:=(S\cap V(G(e_i)))\setminus Y$.
It follows that $S$ is the disjoint union of $S_1$, $S_2$, and $Y$, and each $G(e_i)-S_i$ is acyclic and 
induces some connectivity pattern $P_i$ on $\med(e_i)\setminus (X\cup Y)$ such that $P$ is the join of $P_1$ and $P_2$.
We infer that $A\geq \Val[e,X,P]$.
Conversely, if we have subsets $Y$, $S_1$, and $S_2$ as above, where each $G(e_i)-S_i$ is acyclic and induces a pattern $P_i$ on $\med(e_i)\setminus (X\cup Y)$ such that $P$ is the join of $P_1$ and $P_2$,
then $S:=S_1\cup S_2\cup Y$ is such that $G(e)-(X\cup S)$ is acyclic and induces the connectivity pattern $P$ on $\med(e)\setminus X$.
We infer that $A\leq \Val[e,X,P]$, so $A=\Val[e,X,P]$.

Recall that the algorithm stores, for each edge $e$ of the sc-decomposition, the list of all finite entries of the form $\Val[e,X,P]$, each together with the triple $(e,X,P)$.
Having computed these lists for children $e_1,e_2$ of an edge $e$, we compute the list for $e$ as follows.
For each $Y\subseteq (\med(e_1) \cup \med(e_2)) \setminus \med(e)$ and each 
pair of stored finite values $\Val[e_i,X_i,P_i]$, for $i=1,2$, $X_i$ as above, and such that $P_1$ and $P_2$ have a defined join $P$,
we memorize $\Val[e_1,X_1,P_1]+\Val[e_2,X_2,P_2]+|Y|$ as a candidate for $\Val[e,X,P]$.
Then the list for $e$ comprises all the entries $\Val[e,X,P]$ for which at least one candidate was found, and only the smallest candidate for each triple $(e,X,P)$ is stored on the list.
Since $|\med(e)|\leq b$ and $X\subseteq \med(e)$, there are at most $2^b$ ways to choose $X$.
Further, by Theorem~\ref{thm:patterns} there are at most $2^{\Oh(b)}$ ways to choose a connectivity pattern $P$ on $\med(e)\setminus X$, as there are only $2^{\Oh(b)}$ ways to choose a 
 relation $R$ generating $P$ from the family $\mathcal{R}$ provided by Theorem~\ref{thm:patterns}.
It follows that the lists of entries stored for each edge $e$ of the sc-decomposition will always have length bounded by $2^{\Oh(b)}$.
Hence computing the list for $e$ based on the previously computed lists for its children takes time $2^{\Oh(b)}$.
Since the size of the sc-decomposition is bounded linearly in the number of edges of $G$, it follows that the whole algorithm runs in time $2^{\Oh(b)} \cdot n^{\Oh(1)}$, as claimed.

\section{Dynamic programming on general digraphs}\label{sect:pfthm1}

\newcommand{\treedecomp}{\ensuremath{T}}

In this section we give a full exposition of the proof of Theorem~\ref{thm:simple-dp}. 
To this end, we first recall a number of standard definitions and observations.

A \emph{tree decomposition} of a graph~$G=(V,E)$ is a tree~$\treedecomp$ in which each 
node~$x$ has an assigned set of vertices~$B_x \subseteq V$ (called a \emph{bag}) such that $\bigcup_{x \in \treedecomp} B_x = V$ and the following assertions hold:
\begin{itemize}[nosep]
\item for any $uv \in E$, there exists a node~$x \in \treedecomp$ such that 
$u,v \in B_x$.
\item if $v \in B_x\cap B_y$ for some nodes $x,y$ of $T$, then $v \in B_z$ for each node $z$ on 
the (unique) path from $x$ to $y$ in $\treedecomp$.
\end{itemize}
The \emph{width} of a tree decomposition~$\treedecomp$ is the size of 
the largest bag of $\treedecomp$ minus one, and the {\em{treewidth}} of a graph $G$ is the 
minimum {\em{treewidth}} over all possible tree decompositions of~$G$.

If the treewidth of a graph $G$ is $t$, then a tree decomposition of $G$ of width at most $5t+4=\Oh(w)$ can be computed in time $2^{\Oh(t)}\cdot n$~\cite{BodlaenderDDFLP16}.
Hence, from now on we may assume that we are given a digraph $G$ together with its tree decomposition of width $t$, and the goal is to give a dynamic programming algorithm with running time $2^{\Oh(t)}\cdot n^{\Oh(1)}$.

A {\em{rooted}} tree decomposition is a tree decomposition that is rooted at one of its nodes, called the {\em{root}}; this naturally imposes the child/parent relation on the nodes of the decompositions.
{\em{Nice}} tree decompositions are tree decompositions normalized for the purpose of streamlining the presentation of dynamic programming algorithms.

\begin{definition}[Nice Tree Decomposition] \label{def:nicetreedecomp}
A \emph{nice tree decomposition} is a rooted tree decomposition in which every node is one of the following types:
\begin{itemize}[nosep]
\item \textbf{Leaf node}: a leaf node $x$ of $\treedecomp$ with $B_x = \emptyset$.
\item \textbf{Introduce node}: an internal node~$x$ of $\treedecomp$ 
with one child vertex~$y$ for which $B_x = B_y \cup \{v\}$ 
for some $v \notin B_y$. 
This node is said to \emph{introduce} $v$.
\item \textbf{Forget node}: an internal node~$x$ of $\treedecomp$ with one child 
node~$y$ for which $B_x = B_y \setminus \{v\}$ for some $v \in B_y$. 
This node is said to \emph{forget} $v$.
\item \textbf{Join node}: an internal node $x$ with two children 
$y$ and $z$ satisfying $B_x = B_y = B_z$.
\end{itemize}
Moreover, if $r$ is the root node of the tree decomposition, then $B_r=\emptyset$.
\end{definition}
It is well known that given a tree decomposition of $G$, 
a nice tree decomposition of $G$ of the same width can be constructed in polynomial time. 
We refer to ~\cite{ParametrizedAlgorithmsBook} for further details.
Hence, from now on we may assume that the given tree decomposition $\treedecomp$ is nice.

By fixing the root of $\treedecomp$, we associate with each node $x$ 
in a tree decomposition $\treedecomp$ a vertex set $V_x \subseteq V$, where a vertex 
$v$ belongs to $V_x$ if and only if there is a bag $y$
which is a descendant of $x$ in $\treedecomp$ with $v \in B_y$
(recall that we treat $x$ as its own descendant as well).
We also associate with each bag $x$ of $\treedecomp$ the subgraph $G_x$ of $G$ induced by $V_x$, i.e., $G_x=G[V_x]$.

We use dynamic programming to solve \ProblemName{Directed Feedback Vertex Set}.
To this end, for every node $x \in V(T)$, every subset $X \subseteq B_x$, and every ordering $\sigma$ of $B_x\setminus X$, 
we define the dynamic programming table as follows: $T_x[X,\sigma]$ is the minimum size of a subset $Y$ of $V(G_x)\setminus B_x$ such that $G_x-(X\cup Y)$ is acyclic and admits a topological ordering 
whose restriction to $B_x\setminus X$ is exactly $\sigma$. That is, we define $T_x[X,\sigma]$ as
\[
	\displaystyle \min \{\ |Y|\,\colon\,  Y \subseteq V_x\setminus B_x,\ \sigma' \text{ is a topological ordering of } G_x-(X\cup Y),\text{ and }\sigma'|_{B_x\setminus X}=\sigma\ \},
\]
where we write $\pi|_X$ for the restriction of $\pi$ to the elements of $X$. If there is no such set $Y$, we put $T_x[X,\sigma]=+\infty$.

It is clear that if $r$ is the root node of $\treedecomp$, then the minimum size of a directed feedback vertex set in $G$ is equal to the only value associated with $r$, namely $T_r[\emptyset,\epsilon]$,
where $\epsilon$ denotes the empty ordering
(recall here that the bag of the root node is empty). 
Hence, it remains to compute all the values $T_x[\cdot,\cdot]$ for nodes $x$ of $\treedecomp$. We do it in a bottom-up manner using the following recursive
equations.
\smallskip
\begin{itemize}[nosep]
	\item \textbf{Leaf node:} If $x$ is a leaf node, we see that $G_x$ is an empty graph so $T_x[X,\sigma]$ is only defined for $X=\emptyset$ and $\sigma=\epsilon$ being the empty ordering, 
	                          and $T_x[\emptyset,\epsilon]=0$. 
	\item \textbf{Introduce node:} Suppose $x$ introduces a vertex $v$. If $v \in X$, we have
	\[
		T_x[X,\sigma] = T_y[X\setminus \{v\},\sigma],
	\]
	as $v\in X$ is equivalent to removing $v$ from $G_x$ in the definition of $T_x[\cdot,\cdot]$ anyway.
	On the other hand, if $v\notin X$, we have
	\[
		T_x[X,\sigma] =
 		\begin{cases}
			T_y[X,\sigma|_{B_y\setminus X}], & \text{if } \sigma \text{ orders } N^+_X(v) \cap B_x \text{ after } v \text{ and } N^-_X(v) \cap B_x \text{ before } v,\\
			+\infty & \text{otherwise}.
		\end{cases}
	\]
	Here we use $N^+_X(v)$ and $N^-_X(v)$ to denote the set of out-neighbors and respectively in-neighbors of $v$ in $X$.
	To see this last case, note that if $v \notin X$, 
	it needs to be ordered by $\sigma$, but the topological order of all its in- and out-neighbors in $G_x-X$ (which all lie in $B_y\setminus X$) 
	is already fixed to be as in $\sigma$, so we only need to check that $v$ fits into this ordering.
	\item \textbf{Forget node:} Suppose $x$ forgets a vertex $v$. Then we have
	\[
		T_x[X,\sigma] = \min \{ T_y[X \cup \{v\},\sigma]+1, \min\{ T_y[X,\sigma']\colon \sigma'\text{ extends }\sigma\} \},
	\]
	as in $T_x[X,\sigma]$ we minimize over all sets $Y$ containing $v$ and all sets $Y$ avoiding~$v$. In the second case, vertex $v$ is not removed and hence has to be ordered; hence we consider
	all orderings $\sigma'$ of $B_y\setminus X=(B_x\setminus X)\cup \{v\}$ that {\em{extend}} $\sigma$: they differ from $\sigma$ by placing $v$ somewhere in the order.
	
	\item \textbf{Join node:} For a node $x$ with children $y$ and $z$ we have
	\[
		\displaystyle T_x[X,\sigma] = T_y[X,\sigma] + T_z[X,\sigma].
	\]
	To see this, first note that any topological ordering $\sigma$ of $G_x-(X\cup Y)$ gives rise to topological orderings $\sigma_y = \sigma|_{V(G_y) \setminus (X \cup Y)}$ of $G_y-(X\cup Y)$
	and $\sigma_z = \sigma|_{V(G_z) \setminus (X \cup Y)}$ of $G_z-(X\cup Y)$. As $Y$ is the disjoint union of $Y\cap V(G_y)$ and $Y\cap V(G_z)$, this proves that $T_x[X,\sigma]\leq T_y[X,\sigma]+T_z[X,\sigma]$.
	For the reverse inequality, suppose $\sigma_y$ and $\sigma_z$ are topological orderings of $G_y - (X \cup Y)$ and $G_z - (X \cup Y)$ respectively, and that they agree on $B_x$/ 
	Then they can be combined into a topological ordering of $G_x - (X \cup Y)$ as the relative order of vertices in $V(G_y) \setminus B_x$ and $V(G_z)\setminus B_z$ is immaterial.
\end{itemize}
\smallskip
Since for each bag $B_x$ of the tree decomposition $\treedecomp$ is of size $|B_x|\leq t+1$, 
we compute at most $2^t \cdot t! = 2^{\Oh (t \log t)}$ values of the form $T_x[\cdot,\cdot]$ for each node $x$ of the decomposition.
Moreover, the computation of each value takes time polynomial in $t$, so the overall running time of the algorithm is $2^{\Oh(t\log t)}\cdot n^{\Oh(1)}$, as claimed.
This concludes the description of the algorithm for \ProblemName{Directed Feedback Vertex Set}.

For \ProblemName{Directed Feedback Arc Set} the dynamic programming is analogous, only simpler in that we do not need to consider the vertex subset $X$.
We just define, for each $x\in V(T)$ and each ordering $\sigma$ of $B_x$, the value $T_x[\sigma]$ as the minimum size of a subset $Y \subseteq E(G_x)$ 
such that $G_x - Y$ is acyclic and admits a topological ordering whose restriction to $B_x$ is exactly $\sigma$.
It is straightforward to adapt the recursive equations given above to this definition of the dynamic programming table.
This concludes the proof of Theorem~\ref{thm:simple-dp}.

\section{Lower bound}\label{sect:lowerBound}
In this section we give slightly super-exponential lower bounds under ETH for {\sc{Directed Feedback Arc Set}} and {\sc{Directed Feedback Vertex Set}}  parameterized by treewidth in general digraphs, that is, we prove 
Theorem~\ref{thm:main-lb}.
The hardness reduction happens to work for both problems, producing exactly the same instances.

We will reduce from the following problem, whose hardness was shown by Lokshtanov et al.~\cite{LokshtanovMS11} (see also~\cite[Theorem 14.16]{ParametrizedAlgorithmsBook}).

\defproblempar{\ProblemName{$k\times k$ Hitting Set with thin sets}}
{Family $\Ff$ of subsets of $[k]\times[k]$, each containing at most one element from each~row}
{$k$}
{Is there a set $X$ containing one vertex from each row of $[k]\times[k]$ such that $X \cap F \neq \emptyset$ for each $F \in \Ff$?}

\begin{theorem}[\cite{LokshtanovMS11}]\label{thm:hittingSet}
	Unless ETH fails, \ProblemName{$k\times k$ Hitting Set with thin sets} cannot be solved in time $2^{o(k \log k)}\cdot n^{\Oh(1)}$, where $n$ is the number of input sets.
\end{theorem}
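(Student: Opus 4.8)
The plan is to derive the bound from ETH via a reduction from \probTSAT. By the Sparsification Lemma of Impagliazzo, Paturi and Zane, under ETH there is no $2^{o(N)}$-time algorithm for \probTSAT even on instances with $N$ variables and $\Oh(N)$ clauses; fix such an instance $\varphi$. Choose $k$ as the smallest integer with $k\lfloor\log_2 k\rfloor\ge N$, so that $k=\Theta(N/\log N)$ and hence $k\log k=\Theta(N)$, and partition the variables of $\varphi$ into $k$ groups $V_1,\dots,V_k$, each of size at most $\lfloor\log_2 k\rfloor$. A truth assignment of $\varphi$ is then a tuple $(\alpha_1,\dots,\alpha_k)$ with $\alpha_i$ an assignment of $V_i$, and since $V_i$ has at most $k$ assignments, the intended encoding is: row $i$ of the $[k]\times[k]$ grid represents $V_i$, the columns of row $i$ represent the assignments of $V_i$, and a selection $X$ of one cell per row represents a global assignment $\alpha_X$. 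The family $\Ff$ should then force $\alpha_X$ to satisfy every clause of $\varphi$; a $2^{o(k\log k)}\cdot n^{\Oh(1)}$ algorithm for the resulting instance would solve $\varphi$ in time $2^{o(N)}\cdot\poly(N)$, contradicting ETH.

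The crux, and the step I expect to be by far the hardest, is that $\Ff$ must consist of \emph{thin} sets: a set may use at most one column per row, so it can only assert a disjunction of events ``row $r$ picked column $c$'', and in particular cannot forbid a single column or pin down a single variable inside a group. But ``clause $C=\ell_1\vee\ell_2\vee\ell_3$ is satisfied'' is, under the encoding above, the complement of the event ``for $j=1,2,3$, the group containing the variable of $\ell_j$ takes one of the $\approx k/2$ assignments falsifying $\ell_j$'', which is emphatically not thin. To get around this I would not let the clause constraints mention the variable-group rows directly; instead I would build an intermediate constraint-satisfaction instance with, besides the $\Theta(k)$ variable-group rows, $\Theta(k)$ auxiliary rows, each responsible for a bundle of $\Theta(\log k)$ clauses and whose columns record, for each clause in the bundle, a chosen one of its three literals together with the value of the underlying variable (keeping the alphabet at most $k$). ``Every clause picks a satisfied literal'' then becomes, per clause, a disjunction over its literals of single-cell equalities in the relevant auxiliary rows — a thin set — and the consistency of an auxiliary row with the variable-group row holding the same variable is enforced by further thin sets, after refining the alphabets of the variable-group rows so that each relevant bit can be read off a single column. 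The real work is to arrange the bundling, the refined alphabets and the consistency gadgets so that every set is thin, every legal $X$ encodes an honest satisfying assignment and conversely, and both the grid side and $|\Ff|$ stay within the required bounds.

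With such gadgets in place the remainder is routine: the equivalence of the two instances is proved by reading off the declared literals in one direction and the assignment in the other, and substituting $k=\Theta(N/\log N)$ into a hypothetical $2^{o(k\log k)}\cdot n^{\Oh(1)}$ algorithm gives running time $2^{o(N)}\cdot\poly(N)$ for \probTSAT, contradicting ETH. Thus the sparsification step, the choice of $k$, and the final running-time calculation are standard, and essentially all of the difficulty is concentrated in designing the thin-set encoding of the clauses and proving its correctness.
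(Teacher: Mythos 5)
The paper does not prove this theorem: it is stated purely as a citation of Lokshtanov, Marx and Saurabh~\cite{LokshtanovMS11} (the paper also points to~\cite[Theorem~14.16]{ParametrizedAlgorithmsBook}) and is used as a black box in Section~\ref{sect:lowerBound}. So there is no in-paper proof to compare against; what you are attempting is to re-derive a result from the cited reference.

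As a sketch, your plan has the right high-level shape: sparsify, partition the $N$ variables into $k=\Theta(N/\log N)$ groups of size $\Theta(\log k)$ so that each group has at most $k$ assignments, and identify rows with groups. You also correctly locate the central obstruction: ``clause $C$ is satisfied'' unfolds to a disjunction, over the groups containing the variables of $C$, of events of the form ``group $r$ chose one of the roughly $k/2$ satisfying columns,'' and such a set is thick in row $r$. What you propose does not actually remove this obstruction — it relocates it. Your auxiliary rows declare, per clause, a chosen literal and its value, and then you must enforce consistency with the group rows. That consistency is an implication, and its hitting-set encoding is (negation of ``auxiliary row picked column $a$'') $\lor$ (``group row picked a column with the right bit''). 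The first disjunct is the set $\{(r_{\text{aux}},c):c\neq a\}$, which contains $\Theta(k)$ cells of the same row, so it is not thin. ``Refining the alphabet so each relevant bit can be read off one column'' does not change this: to assert a row did \emph{not} pick a particular column you still must list all the other columns of that row in one set. So the crucial gadget is genuinely missing, and you acknowledge that yourself.

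For comparison, the proof in~\cite{LokshtanovMS11} does not reduce \probTSAT directly to \ProblemName{$k\times k$ Hitting Set with thin sets}. It first proves a $2^{o(k\log k)}$ lower bound for a $k\times k$ grid problem (a clique variant) via a grouping reduction from $3$-Coloring — conceptually the same partitioning trick you use — and then propagates hardness through a chain of parameter-preserving reductions among $k\times k$ grid problems (permutation clique/independent-set variants, then hitting-set variants). Staying inside the family of grid problems is precisely what makes it possible to keep the constraints thin: each intermediate reduction only needs to re-encode one-choice-per-row structure, never a raw implication between two rows, which is where your direct-from-SAT route breaks down.
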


\pagebreak[3]
Let us first define an intermediate problem. An \emph{$n$-permutation $d$-constraint} is a tuple $(i_1,\dots,i_d) \in [n]^d$ of $d$ different indices.
A permutation $\sigma\colon [n] \to [n]$ \emph{satisfies} such a constraint if $\sigma(i_1) < \sigma(i_2) < \cdots < \sigma(i_d)$.
A \emph{$k$-CNF $n$-permutation $d$-formula} is a conjunction of clauses, each of which is a disjunction of at most $k$ $n$-permutation $d$-constraints.
The {\em{length}} of a clause is the number of disjuncts (constraints) in it.
Satisfaction of such a formula by a permutation $\sigma\colon [n] \to [n]$ is defined naturally.


We first show hardness for the satisfiability of 3-formulas, with the parameter $k$ denoting both the length of clauses and the number of indices on which the permutation is defined.

\begin{lemma}\label{lem:3PermutationSat}
	Unless ETH fails, the satisfiability of a given $k$-CNF $k$-permutation 3-formula cannot be decided in time $2^{o(k \log k)}\cdot n^{\Oh(1)}$, where $n$ is the size of the formula. 
\end{lemma}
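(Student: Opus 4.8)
The plan is to reduce from \ProblemName{$k\times k$ Hitting Set with thin sets}, using Theorem~\ref{thm:hittingSet}, and to encode a selection of one cell per row by a permutation on an index set of size $\Oh(k)$. First I would set up the \emph{selection gadget}: introduce, for each row $r\in[k]$, a block of indices together with extra ``separator'' or ``anchor'' indices whose relative order is forced, so that a satisfying permutation is forced to place exactly one designated index of each row into a distinguished position; the position of that index inside the block encodes which column $c\in[k]$ is chosen in row $r$. The point of permutation 3-constraints of the form $\sigma(i_1)<\sigma(i_2)<\sigma(i_3)$ is that a single clause, being a disjunction of up to $k$ such constraints, can express ``the chosen column of row $r$ lies in one of the following $k$ positions'', which is exactly the width we are allowed. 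I would make the number of indices $n=\Theta(k)$ and the clause length $\Theta(k)$, so that an algorithm running in time $2^{o(k\log k)}\cdot n^{\Oh(1)}$ for the permutation problem would yield one running in time $2^{o(k\log k)}\cdot n^{\Oh(1)}$ for \ProblemName{$k\times k$ Hitting Set with thin sets}, contradicting Theorem~\ref{thm:hittingSet}.

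Next I would build the \emph{constraint gadget}: for each set $F\in\Ff$, since $F$ contains at most one element from each row, $F$ is a partial function from rows to columns; the requirement $X\cap F\neq\emptyset$ says that for \emph{some} row $r$ on which $F$ is defined, the column chosen in row $r$ equals $F(r)$. I would translate ``the column chosen in row $r$ equals $c$'' into a single $n$-permutation 3-constraint (or a bounded-size conjunction of them) comparing the index selected for row $r$ against the anchor indices delimiting position $c$; then the clause for $F$ is the disjunction over all rows $r\in[k]$ in the domain of $F$ of these constraints. This clause has length at most $k$, matching the $k$-CNF bound. The correctness argument then has two directions: from a hitting set $X$ one reads off the intended permutation and checks every clause is satisfied; conversely, from any satisfying permutation one extracts, for each row, the encoded column, argues via the forcing structure of the selection gadget that this is well-defined (exactly one column per row), and checks the constraint clauses guarantee $X$ hits every $F$.

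The main obstacle I expect is the design of the selection gadget so that it simultaneously (i) forces a clean one-column-per-row encoding using only 3-constraints and clauses of length $\le k$, and (ii) keeps the total number of indices $\Oh(k)$, since even a factor of $k$ blowup in the number of indices would be fatal — it would turn the $2^{o(k\log k)}$ lower bound into something weaker after substituting $n=\Theta(k^2)$. The natural trick is to lay out all $k$ row-blocks along one permutation of length $\Theta(k)$, using a shared linearly ordered scale of $\Theta(k)$ anchor points, with a few forced constraints pinning the anchors in increasing order and forcing each row's selected index strictly between consecutive anchors; the chosen column of row $r$ is then just which anchor-interval the selected index of row $r$ falls into. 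Verifying that this really pins things down — in particular that a satisfying permutation cannot ``cheat'' by collapsing or reordering anchors — is the delicate part, and I would handle it by making the anchor-ordering constraints mandatory single-constraint clauses (length-one clauses), so they are rigid, and then arguing about the remaining degrees of freedom. Everything else (the clause-by-clause translation of $\Ff$, the size bookkeeping, and the equivalence proof) is routine once the gadget is fixed.
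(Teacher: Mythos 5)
Your plan is essentially the paper's proof: a shared scale of $k+1$ anchor indices ($k+1,\dots,2k+1$) pinned in increasing order by mandatory single-constraint 3-chains, one index per row forced between the first and last anchor, the chosen column read off as the anchor-interval into which the row's index falls, and one clause per set $F\in\Ff$ disjoining the constraints $(k+j,i,k+j+1)$ over $(i,j)\in F$ — giving $n=2k+1=\Theta(k)$ indices and clause length at most $k$ as required. You correctly anticipated the one delicate point (anchors must be rigid, so use length-one clauses for them), which is exactly how the paper handles it.
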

\begin{proof}
        Without loss of generality suppose $k\geq 3$.
	Let $\Ff$ be the input instance of \ProblemName{$k\times k$ Hitting Set with thin sets}.
	We will construct in polynomial time a $k$-CNF $(2k+1)$-permutation $3$-formula whose satisfiability is equivalent to the input instance $\Ff$.
	This will prove the claim by Theorem~\ref{thm:hittingSet}.

	To an initially empty formula $\phi$ we add the following clauses, each with a single 3-constraint, to ensure that $\{k+1,\dots,2k+1\}$ are ordered increasingly by the permutation: $$(k+1,k+2,k+3),(k+2,k+3,k+4),\ldots, (2k-1,2k,2k+1).$$
	Then, for each $i \in [k]$ we add a clause with a single 3-constraint $(k+1,i,2k+1)$.
	Finally, for each set $F\in \Ff$, we add the following clause $C_F$ to~$\phi$:
	the clause $C_F$ is the disjunction of constraints $(k+j,i,k+j+1)$ over elements $(i,j)$ of $F$.
	Since $F$ contains at most one element of each row, the clause $C_F$ is a disjunction of at most $k$ constraints.
	This concludes the construction.

	Suppose the input instance $\Ff$ has a solution $X \subseteq [k]\times [k]$.
	We define a permutation $\sigma\colon [2k+1] \to [2k+1]$ satisfying $\phi$ as follows.
	First, indices $\{k+1,\dots,2k+1\}$ are ordered increasingly.
	For each of the remaining indices $i \in [k]$, let $(i,j) \in X$ be the unique element of $X$ in row $i$.
	Then order $i$ to be between $k+j$ and $k+j+1$, arbitrarily choosing the ordering among other indices that were also placed between $k+j$ and $k+j+1$.
	This ordering defines the permutation $\sigma$, which clearly satisfies the formula.
	To see this, note that for every $F\in \Ff$, if $(i,j) \in X \cap F$ then the constraint $(k+j,i,k+j+1)$ is in $C_F$ due to $(i,j)\in F$, and it is satisfied by $\sigma$ due to $(i,j)\in X$.

	Conversely, suppose $\phi$ is satisfied by some permutation $\sigma\colon [2k+1] \to [2k+1]$.
	Then $\sigma$ orders $\{k+1,\dots,2k+1\}$ increasingly and each of the remaining indices $i \in [k]$ is ordered between $k+1$ and $2k+1$.
	Hence, for each $i\in [k]$ there is a unique index $j_i\in [k]$ such that $i$ is ordered between $k+j_i$ and $k+j_i+1$.
	Let $X = \{ (i,j_i) \colon i \in [k]\}$. Since $\sigma$ satisfies $\phi$, for each $F\in\Ff$ some constraint in the clause $C_F$ is satisfied by $\sigma$, 
	and this clause must be of the form $(k+j,i,k+j+1)$ for some $(i,j) \in F$.
	As $j_i$ is the unique index such that $i$ is ordered between $k+j_i$ and $k+j_i+1$, it must be that $(i,j_i) \in F$ for some $i\in [k]$. 
	Therefore $X\cap F \neq \emptyset$, so $X$ is a solution to the input instance.
\end{proof}

Next, we show hardness for larger, but structured 2-formulas.
For a $3$-CNF $n$-permutation $2$-formula $\phi$, the \emph{incidence graph} $I(\phi)$ of $\phi$ is the bipartite graph defined as follows: the vertex set is formed by 
indices from $[n]$ on one side and clauses of $\phi$ on the other side, 
and there is an edge between every clause and each index that occurs in some constraint of the clause. Thus, each clause has degree at most $6$ in $I(\phi)$.

\begin{lemma}\label{lem:2PermutationSat}
	Unless ETH fails, the satisfiability of a given 3-CNF $n$-permutation 2-formula with incidence graph of treewidth $t$ cannot be decided in time $2^{o(t\,\log t)}\cdot n^{\Oh(1)}$.
	This holds even for formulas in which every clause has length exactly 3 or 1, and has no repeating indices.
\end{lemma}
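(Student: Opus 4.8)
The plan is to reduce from the satisfiability of $k$-CNF $k$-permutation $3$-formulas, which is hard by Lemma~\ref{lem:3PermutationSat}. Starting from such a formula $\psi$ on a set of $m=\Oh(k)$ indices, I would build in polynomial time an equisatisfiable $3$-CNF $n$-permutation $2$-formula $\phi$ in which every clause has length exactly $3$ or $1$, no clause repeats an index among its constraints, and whose incidence graph has treewidth $\Oh(k)$. Feeding $\phi$ to a hypothetical $2^{o(t\log t)}\cdot n^{\Oh(1)}$ algorithm for the target problem would then solve $\psi$ in time $2^{o(k\log k)}\cdot|\psi|^{\Oh(1)}$, contradicting ETH.

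The construction combines arity reduction with clause shortening, both realised by encoding an auxiliary bit with a fresh pair of indices $(q,\bar q)$: this pair ``is true'' when $\sigma(q)<\sigma(\bar q)$ and ``is false'' otherwise (the two cases are complementary since $\sigma$ is a permutation). A single $3$-constraint $(a,b,c)$ is the conjunction of the $2$-constraints $(a,b)$ and $(b,c)$, so a length-$1$ clause of $\psi$ is replaced by the two unit clauses $(\sigma(a)<\sigma(b))$ and $(\sigma(b)<\sigma(c))$. For a clause $C=\gamma_1\vee\dots\vee\gamma_\ell$ with $\ell\ge 2$ and $\gamma_t=(a_t,b_t,c_t)$, I introduce a fresh selector pair $(p_t,\bar p_t)$ for each $t$, replace $C$ by the clause $\bigvee_{t=1}^{\ell}(\sigma(p_t)<\sigma(\bar p_t))$, and add, for each $t$, the two implication clauses $(\sigma(\bar p_t)<\sigma(p_t))\vee(\sigma(a_t)<\sigma(b_t))$ and $(\sigma(\bar p_t)<\sigma(p_t))\vee(\sigma(b_t)<\sigma(c_t))$, which express ``$p_t$ true $\Rightarrow \sigma(a_t)<\sigma(b_t)<\sigma(c_t)$''. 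Finally I shorten the length-$\ell$ selector clause to clauses of length $3$ by the standard $3$-SAT chaining trick, using a fresh pair $(y_j,\bar y_j)$ for each link of the chain; and any clause that ends up of length $2$ (the implication clauses, and the selector clause when $\ell=2$) is padded to length $3$ by adding as a disjunct a new literal $(\sigma(d)<\sigma(\bar d))$ on a fresh pair $(d,\bar d)$ together with a unit clause $(\sigma(\bar d)<\sigma(d))$ forcing that disjunct false. All pairs are fresh and mutually distinct, so no clause repeats an index, and every resulting clause has length $3$ or $1$.

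Correctness follows the usual pattern. From a satisfying assignment $\sigma_0$ of $\psi$ one extends to a satisfying $\sigma$ of $\phi$ by choosing, for each clause $C$, one satisfied disjunct $\gamma_{t^\ast}$, making $(p_{t^\ast},\bar p_{t^\ast})$ true and all other selector pairs false, making the dummy pairs false, and setting the chain pairs to the standard assignment that satisfies the chain once its $t^\ast$-th input literal is true; the only implication clauses whose first disjunct fails are those for $t^\ast$, and their second disjuncts hold because $\gamma_{t^\ast}$ is satisfied by $\sigma_0$. Conversely, if $\sigma$ satisfies $\phi$, then for each clause $C$ the chaining clauses force $\sigma(p_t)<\sigma(\bar p_t)$ for some $t$, whence the two implication clauses for that $t$ force $\sigma(a_t)<\sigma(b_t)<\sigma(c_t)$, so $\gamma_t$ and hence $C$ is satisfied by the permutation that $\sigma$ induces on $[m]$; length-$1$ clauses are handled directly by their two unit clauses.

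The key step is the treewidth bound. All of the $\Oh(k)$ original indices of $\psi$ can be placed into every bag, so it suffices to note that deleting these indices from the incidence graph $I(\phi)$ leaves a graph of constant treewidth. Indeed, the only original indices touched by a clause of the gadget built from some original clause $C$ are among the indices $a_t,b_t,c_t$ occurring in $C$, and each fresh pair ($p_t,\bar p_t$, $y_j,\bar y_j$, or a dummy pair) occurs only in clauses of that gadget; hence after deleting the original indices, $I(\phi)$ decomposes into pairwise vertex-disjoint pieces, one per original clause of $\psi$, and each such piece is a bounded-width ``caterpillar-like'' graph (a path of chaining clauses with a few pendant bounded-size subgadgets). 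Therefore $\tw(I(\phi))\le \Oh(k)+\Oh(1)=\Oh(k)$, and since the construction runs in polynomial time and blows up the formula only polynomially, the hypothetical algorithm run on $\phi$ would contradict Lemma~\ref{lem:3PermutationSat}. I expect the main obstacle to be precisely making this locality airtight: verifying that no fresh index is shared between the gadgets of two distinct clauses and that the chaining and padding do not secretly introduce such sharing, together with the routine but fiddly bookkeeping ensuring every final clause has length exactly $3$ or $1$ without repeated indices.
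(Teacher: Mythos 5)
Your proof is correct and follows essentially the same route as the paper's: reduce from Lemma~\ref{lem:3PermutationSat}, split each $3$-constraint into the conjunction of two $2$-constraints, encode auxiliary booleans via fresh index pairs, chain to bound clause length at $3$, and obtain the $\Oh(k)$ treewidth bound from a star decomposition placing all original indices in a central bag. The only difference is gadget economy: the paper merges the selector and chain roles into a single family of pairs $j_1,\dots,j_{2k'+2}$ (with $Z,Z'$ capping the chain) so that each $D_i,D_i'$ serves simultaneously as chain link and implication, whereas you use separate selector pairs $p_t$, separate implication clauses, a second chain on $y_j$'s, and extra padding pairs --- more auxiliary indices, but the argument is unchanged in substance.
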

\begin{proof}
	Let $\phi$ be a $k$-CNF $k$-permutation 3-formula.
	We will construct  in polynomial time a 3-CNF $n$-permutation 2-formula $\psi$ for some $n=\Oh(k^2)$ such that
	$\psi$ is satisfiable iff $\phi$ is and the incidence graph of $\psi$ has treewidth $\Oh(k)$. 
	The claim then follows by Lemma~\ref{lem:3PermutationSat}. 

	The idea is that every 3-constraint $(a,b,c)$ can be thought of as a conjunction $(a,b) \wedge (b,c)$ of two 2-constraints (expressing $\sigma(a)<\sigma(b)\ \wedge\ \sigma(b)< \sigma(c)$).
	Intuitively, we can then transform the obtained `non-CNF formula' into a 3-CNF in a standard way: a clause $(x \wedge x') \vee (y \wedge y') \vee (z \wedge z') \vee \dots$ would be replaced by
	\begin{align*}
		(p_1)\ \wedge\ (\neg p_1 \vee x \vee p_2)\ &\wedge\ (\neg p_2 \vee y \vee p_3)\ \wedge\ (\neg p_3 \vee z \vee p_4)\ \wedge\ \dots\\
		      \wedge\ (\neg p_1 \vee x'\vee      p_2)\ &\wedge\ (\neg p_2 \vee y'\vee p_3)\ \wedge\ (\neg p_3 \vee z'\vee p_4)\ \wedge\ \dots\ \wedge\ (\neg p_n)
	\end{align*}
	where $p_1,p_2,p_3,\dots,p_n$ are fresh auxiliary variables that do not appear anywhere else.

	Formally, we will ask for $n$-permutations with $n := k+(2k+2)k$; the additional indices are in order to make room for `auxiliary variables'.
	We construct $\psi$ as an initially empty conjunction.
	Each clause $C$ of $\phi$ is a disjunction $C_1 \vee \dots \vee C_{k'}$  ($k' \leq k$) of some 3-constraints $C_i = (a_i,b_i,c_i) \in [k]^3$.
	Let $j_1,j_2,\dots,j_{2k'+2} \in [n]\setminus[k]$ be some indices that were not yet used in any constructed clause.
	For each $i\in [k']$, we add the following clauses $D_i$ and $D_i'$ to $\psi$:
	\begin{align*}
	 D_i & = (j_{2i},j_{2i-1}) \vee (a_i,b_i) \vee (j_{2i+1},j_{2i+2})\\
	 D_i' & = (j_{2i},j_{2i-1}) \vee (b_i,c_i) \vee (j_{2i+1},j_{2i+2})
	\end{align*}
	Then, we add two clauses with a single constraint each: $Z=(j_1,j_2)$ and $Z'=(j_{2k'+2},j_{2k'+1})$.
	Repeating this for each clause $C$ of $\phi$ concludes the construction.
	Let $W(C)$ be the set consisting of clauses and indices used for $C$: clauses $Z,Z'$, clauses $D_i,D_i'$ for each $i\in [k']$, and indices $j_1,j_2,\ldots,j_{2k'+2}$ as above.
	Then $[k]$ together with sets $W(C)$ for clauses $C$ of $\phi$ form a partition of the vertex set of the incidence graph $I(\psi)$ of the constructed formula.

	We first bound the treewidth of $I(\psi)$.
	Observe that in $I(\psi)$, if we remove all the $k$ vertices corresponding to $[k]\subseteq [n]$, the only remaining edges have both endpoints within the same $W(C)$ for some clause $C$ of $\phi$.
	Since each $W(C)$ has size at most $3k+4$, each connected component of the remaining graph has size at most $3k+4 = \Oh(k)$.
	Therefore $I(\psi)$ admits a very simple tree decomposition of width $\Oh(k)$: the shape of the decomposition is a star with $[k]$ as the bag at the center,
	and each petal of the star corresponds to one clause $C$ of $\phi$---its bag is $[k]\cup W(C)$.
	
	It remains to prove that $\phi$ is satisfiable if and only if $\psi$ is.
	First, given a permutation $\sigma\colon [k]\to[k]$ satisfying $\phi$, we can construct a permutation $\pi\colon [n] \to [n]$ satisfying $\psi$ as follows.
	The indices in $[k]$ are ordered in the same way as in $\sigma$.
	To order the indices $j_1,j_2,\dots,j_{2k'+2} \in [n]\setminus[k]$ used for a clause $C$ of $\phi$,
	let $C_{i_0} = (a_{i_0},b_{i_0},c_{i_0})$ be the constraint satisfied by $\phi$ in this clause. Then we order 
	$j_{2i-1}$ before $j_{2i}$ for $1 \leq i \leq i_0$, and $j_{2i}$ before $j_{2i-1}$ for $i > i_0$.
	All other ordering choices are arbitrary.
	Observe that clauses $D_{i_0}, D_{i_0}'$ are satisfied because their middle constraint $ (a_{i_0},b_{i_0},c_{i_0})$ is satisfied, 
	clauses $D_i, D_i'$ for $i<i_0$ are satisfied by their right constraint $(j_{2i+1},j_{2i+2})$, 
	while clauses $D_i, D_i'$ for $i>i_0$ are satisfied by their left constraint $(j_{2i},j_{2i-1})$.
	This proves that $\pi$ indeed satisfies $\psi$.
	
	Conversely, let $\pi\colon[n] \to [n]$ be a permutation satisfying the constructed formula $\psi$.
	Then we claim the permutation $\sigma\colon[k] \to [k]$ that orders $[k]$ in the same way as $\pi$ satisfies the original formula $\phi$. 
	Indeed, take any clause $C=C_1 \vee \dots \vee C_{k'}$ of $\phi$ and consider the ordering in $\pi$ of each pair of the corresponding indices $(j_1, j_2), (j_3,j_4), \dots, (j_{2k'+1},j_{2k'+2})$.
	The first of these pairs must be ordered increasingly, while the last one must be ordered decreasingly; this is due to the clauses $Z,Z'$ introduced for $C$. 
	Hence we may define $i_0 \in [k']$ to be the last index such that $(j_{2i_0-1},j_{2i_0})$ is ordered increasingly in $\pi$; that is, $\pi(j_{2i_0-1})<\pi(j_{2i_0})$.
	Then  $(j_{2i_0+1},j_{2i_0+2})$ is ordered decreasingly, hence the clauses $D_{i_0}$ and $D_{i_0}'$ are not satisfied by their left and right constraints. 
	Therefore, they are both satisfied by their middle constraints, that is, $\pi(a_{i_0}) < \pi (b_{i_0})$ and $\pi(b_{i_0}) < \pi(c_{i_0})$. 
	This means $\pi$ satisfies the 3-constraint $(a_{i_0},b_{i_0},c_{i_0})$. 
	Since $\sigma$ orders $a_{i_0}, b_{i_0}, c_{i_0} \in [k]$ in the same way, it satisfies this 3-constraint as well, hence it satisfies the clause $C$ that contains it.
\end{proof}

We proceed to reducing to satisfiability of permutation formulas as described in Lemma~\ref{lem:2PermutationSat} to {\sc{Directed Feedback Vertex (Arc) Set}}.
Permutations of $[n]$ will be encoded as orderings of a subset of $n$ `terminal' vertices in the graph constructed by the reduction.
The graph will contain gadgets that ensure that the permutation satisfies the original 3-CNF $n$-permutation 2-formula if and only if the ordering can be extended to a topological ordering of the whole graph, 
after deleting a prescribed number of vertices (edges). 
The key element is the or-gadget depicted in Figure~\ref{fig:gadget}, which encodes a clause that is a disjunction of three 2-constraints.
Note that this or-gadget has 6 terminal vertices, named $x_i,x_i'$ for $i \in [3]$.
The final graph is obtained essentially by taking disjoint copies of the or-gadget and identifying their terminal vertices with indices from $[n]$.

\begin{figure}[b!]
	\centering
	\begin{tikzpicture}[->,>=stealth,thick,scale=0.9]
	\tikzstyle{v} = [white!25!gray!80!blue,circle,fill,minimum size=0.5]
	\tikzstyle{t} = [black,rectangle,draw,fill=blue!10!white,minimum size=1]
	\node[v] (v1a) at (110:1.5) {};
	\node[v] (v1b) at (70:1.5) {};
	\node[v] (v2a) at (-10:1.5) {};
	\node[v] (v2b) at (-50:1.5) {};
	\node[v] (v3a) at (-130:1.5) {};
	\node[v] (v3b) at (-170:1.5) {};
	\node[t,label=left:$x_1$]   (x1a) at ($(v1a)+(90:1.5)$) {};
	\node[t,label=right:$x_1'$] (x1b) at ($(v1b)+(90:1.5)$) {};
	\node[t,label=above:$x_2$] (x2a) at ($(v2a)+(-30:1.5)$) {};
	\node[t,label=above:$x_2'$] (x2b) at ($(v2b)+(-30:1.5)$) {};
	\node[t,label=above:$x_3$] (x3a) at ($(v3a)+(-150:1.5)$) {};
	\node[t,label=above:$x_3'$] (x3b) at ($(v3b)+(-150:1.5)$) {};
	\draw (x1a)--(v1a); \draw (v1b)--(x1b);
	\draw (x2a)--(v2a);	\draw (v2b)--(x2b);
	\draw (x3a)--(v3a);	\draw (v3b)--(x3b);
	\draw (v1a) edge[bend left=15] node[above]{$e_1$} (v1b)  (v1b) edge[bend left] (v2a)
    (v2a) edge[bend left=15] node[right]{$e_2$} (v2b)  (v2b) edge[bend left] (v3a)
    (v3a) edge[bend left=15] node[left]{$e_3$} (v3b)  (v3b) edge[bend left] (v1a);
	\draw (v1b) edge[bend left] (v3a)
    (v2b) edge[bend left] (v1a)
    (v3b) edge[bend left] (v2a);
\end{tikzpicture}
	\caption{The or-gadget, with 6 terminal vertices $x_i,x_i'$ ($i \in [3]$) marked as squares.}
	\label{fig:gadget}
\end{figure}
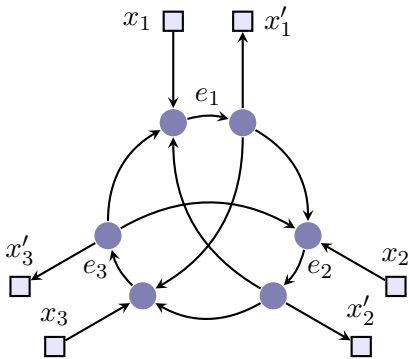

\begin{lemma}\label{lem:inspectOrGadget}
	For an ordering $\prec$ of the terminal vertices of the or-gadget, $\prec$ can be extended to a topological ordering of the or-gadget with some 2 vertices (edges) deleted 
	if and only if $x_1 \prec x_1'$ or $x_2 \prec x_2'$ or $x_3 \prec x_3'$. 
	Furthermore, every subgraph of the or-gadget obtained by deleting at most one non-terminal vertex or an edge from it, contains a directed cycle.
\end{lemma}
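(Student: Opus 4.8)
The plan is to analyze the or-gadget in Figure~\ref{fig:gadget} directly by a finite case analysis, exploiting its rotational symmetry. First I would set up notation: the six non-terminal vertices are $v_1,v_1',v_2,v_2',v_3,v_3'$ (writing $v_i'$ for what the figure calls $v_{ib}$), the edges $e_i$ go from $v_i$ to $v_i'$, and there are two kinds of ``wrap-around'' arcs, the short ones $v_i' \to v_{i+1}$ and the long ones $v_i' \to v_{i-1}$ (indices mod $3$), together with the pendant edges $x_i \to v_i$ and $v_i' \to x_i'$. I would first observe that the subgraph induced on $\{v_1,v_1',v_2,v_2',v_3,v_3'\}$ contains, for each $i$, the $2$-cycle-free but crucial directed triangles: $v_i \to v_i' \to v_{i+1} \to v_{i+1}' \to \dots$, and in fact several short directed cycles. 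The key structural fact to extract is: \emph{the only directed cycles using no pendant vertices are those that traverse at least two of the edges $e_i$}, because the wrap-around arcs alone (short and long) cannot form a cycle among $\{v_1',v_2',v_3'\}$ mapping back to $\{v_1,v_2,v_3\}$ without passing through some $e_i$. From this, deleting a single $e_i$ (say $e_1$) should still leave a cycle through $e_2$ and $e_3$; deleting any single wrap-around arc or pendant vertex leaves the triangle $v_1 \to v_1' \to v_2 \to v_2' \to v_3 \to v_3' \to v_1$ (using the short arcs) plus the edges $e_i$, hence a cycle. This establishes the ``Furthermore'' part.

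For the main equivalence, I would argue both directions. For the forward direction (deleting $2$ vertices/edges and extending $\prec$ to a topological order forces some $x_i \prec x_i'$): suppose for contradiction that $x_i' \prec x_i$ for all $i \in [3]$. Since each $e_i$ goes $v_i \to v_i'$, in any acyclic subgraph containing $e_i$ together with the pendant edges $x_i \to v_i$ and $v_i' \to x_i'$, any topological order puts $x_i \prec x_i'$; so to respect $\prec$ we must, for each $i$, delete something breaking the path $x_i \to v_i \to v_i' \to x_i'$ — that is delete $v_i$, $v_i'$, or $e_i$ (deleting a pendant edge doesn't help since we need $x_i\prec x_i'$ to fail to be \emph{implied}, and actually $x_i\to v_i$ or $v_i'\to x_i'$ are also candidates — I should be careful to include these). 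That is three vertex-disjoint ``obstruction paths'' $x_i \rightsquigarrow x_i'$, so we would need at least $3$ deletions, contradicting the budget of $2$. For the reverse direction, given $\prec$ with (WLOG by symmetry) $x_1 \prec x_1'$, I would exhibit an explicit set of $2$ deleted vertices (e.g.\ delete $v_2'$ and $v_3$, or whichever pair kills all cycles) and check that the remaining graph is acyclic and admits a topological order extending $\prec$: after the deletions only $e_1$ and a few wrap-around/pendant arcs survive, the induced partial order on terminals is just $x_1 \to v_1 \to v_1' \to x_1'$ plus isolated-ish pieces, which is consistent with any $\prec$ satisfying $x_1 \prec x_1'$.

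The main obstacle I anticipate is the bookkeeping in the reverse direction: I need to choose, for each of the three symmetric cases $x_i \prec x_i'$, a concrete pair of deletions and verify acyclicity \emph{and} that the resulting acyclic graph imposes no constraint on the terminals beyond $x_i \prec x_i'$ — in particular that the wrap-around arcs surviving the deletion do not force, say, $x_2 \prec x_3$ in a way that could conflict with an adversarial $\prec$. Concretely I expect to need to delete the two vertices on the ``opposite'' path from $x_i$, e.g.\ for $x_1 \prec x_1'$ delete $v_2$ and $v_3'$ (the heads/tails that sever both $e_2$ and $e_3$ and all cycles through them), then confirm the leftover arcs form a DAG whose only relation among $x_1,\dots,x_3'$ is the chain from $x_1$ to $x_1'$. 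This is routine but must be done carefully; the rotational symmetry of the gadget cuts the work to a single case. I would also double-check the ``Furthermore'' claim on the few edge types by noting explicitly that the outer hexagonal cycle $v_1 \to v_1' \to v_2 \to v_2' \to v_3 \to v_3' \to v_1$ (alternating $e_i$'s and short arcs) is vertex-disjoint from all pendant vertices and survives the deletion of any single long arc or pendant element, while deleting a single $e_i$ or a single short arc leaves a Hamiltonian-type cycle on the six inner vertices using the remaining inner arcs.
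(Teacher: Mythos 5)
Your plan is correct and follows essentially the same route as the paper: the reverse direction by an explicit choice of two deletions (the paper deletes $e_2,e_3$ or any two vertices incident to them, which matches your $\{v_2,v_3'\}$ or $\{v_2',v_3\}$), the forward direction via the three internally-disjoint paths $x_i\to v_i\to v_i'\to x_i'$ that a budget of $2$ cannot all cut, and the ``Furthermore'' part by inspection of single deletions. One small inaccuracy worth tidying: after deleting an edge $e_i$, the vertex $v_i$ becomes a sink among the inner six vertices, so there is no Hamiltonian cycle on them; what survives instead is the $4$-cycle $v_{i+1}\to v_{i+1}'\to v_{i+2}\to v_{i+2}'\to v_{i+1}$ using the two remaining $e_j$'s and the appropriate wrap-around arcs, which is all you need.
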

\begin{proof}
	Given an ordering $\prec$ of the terminal vertices such that $x_1 \prec x_1'$, one can remove $e_2$ and $e_3$, or any two vertices incident to them, to create an acyclic subgraph of the or-gadget that
	admits a topological ordering extending $\prec$. The cases of orderings $\prec$ with $x_2\prec x_2'$ and with $x_3\prec x_3'$ are symmetric.
	Conversely, any removal of two vertices or edges from the or-gadget leaves some directed path $x_i \to x_i'$ ($i \in [3]$) unharmed, implying $x_i \prec x_i'$ in any topological ordering of the obtained subgraph. Finally, it is easy to check that at least two non-terminal vertices or edges of the or-gadget have to be removed to make it acyclic.
\end{proof}

We are ready to conclude our reduction and prove Theorem~\ref{thm:main-lb}.

\begin{proof}[Proof of Theorem~\ref{thm:main-lb}]
	We give a reduction from the satisfiability problem for 3-CNF $n$-permutation 2-formulas to {\sc{DFVS}} and {\sc{DFAS}}.
	More precisely, on the input of the reduction we are given a 3-CNF $n$-permutation 2-formula $\psi$ with an incidence graph of treewidth $t$,
	where we assume that every clause of $\psi$ has length exactly 3 or 1, and has no repeating indices.
	We will construct in polynomial time an equivalent instance of (the decision variant of) {\sc{DFVS}} ({\sc{DFAS}}) of treewidth $\Oh(t)$.
	This will prove the claim by Lemma~\ref{lem:2PermutationSat}.

	We construct a digraph $G$ starting from $[n]$ as the vertex set and no edges.
	For each clause of length 1 in $\psi$, let $(a,a') \in [n]^2$ be the unique constraint in it.
	Then we add an edge from $a$ to $a'$ to $G$.
	For each clause of length 3 in $\psi$, let $(a_1,a_1'), (a_2,a_2'), (a_3,a_3') \in [n]^2$ be its constraints.
	Then we add a new copy of the or-gadget to $G$, and for each $i\in [3]$ we identify $x_i$ and $x_i'$ with $a_i$ and $a_i'$, respectively.
	Finally, we set $k$ to be twice the number of clauses of length 3 in $\psi$.
	The obtained instance $(G,k)$ will be treated both as a {\sc{DFVS}} instance and as a {\sc{DFAS}} instance.
	We proceed with the reasoning for {\sc{DFVS}} and {\sc{DFAS}} in parallel, giving always the counterpart for {\sc{DFAS}} in parentheses.
	
	Let us first bound the treewidth of $G$.
	Observe that $G$ can be obtained from $I(\psi)$ by the following replacements. First, each vertex $w$ representing 
	a clause of length 3 (with 6 neighbors in $[n]$) is replaced by a copy of the or-gadget, consisting of $6$ nonterminal vertices connected via $6$ edges to the former neighbors of $w$, which are identified
	with the terminal vertices of the gadget.
	Second, each vertex $w$ representing a clause of length 1 (with 2 neighbors in $[n]$) is replaced by an edge connecting the former neighbors of $w$.
	It is easy to see that the first operation can only increase the treewidth by multiplicative factor at most 6, while the second operation can only decrease the treewidth. 
	Therefore $G$ has treewidth $\Oh(t)$.

	It remains to show that the instances are equivalent, which boils down to applying Lemma~\ref{lem:inspectOrGadget}.
	Let $\pi\colon [n] \to [n]$ be a permutation satisfying $\psi$.
	Consider the ordering $\prec$ of $[n] \subseteq V(G)$ that is induced by $\pi$; i.e., $i\prec j$ iff $\pi(i)<\pi(j)$.
	We will extend $\prec$ to a topological ordering of $G-X$, for some set $X$ of $k$ vertices (edges).
	The edges introduced for clauses of length 1 are already oriented in the right way by the construction.
	Hence, it suffices to extend $\prec$ to a topological ordering of each copy of the or-gadget independently, since $G$ has no other edges.
	Consider the copy introduced for a clause $C = (a_1,a_1') \vee (a_2,a_2') \vee (a_3,a_3') $ of $\psi$.
	Recall that the terminal vertices of this copy have been identified with $a_i, a_i' \in [n]$ accordingly.
	Suppose without loss of generality that $(a_1,a_1')$ is satisfied by $\pi$, that is, $a_1\prec a_2$.
	Then by Lemma~\ref{lem:inspectOrGadget}, it suffices to add 2 vertices (edges) to the deletion set $X$ to allow extending the ordering on $a_i, a_i'$ for $i\in [n]$ to a topological ordering of the whole or-gadget.
	In total, we add 2 vertices (edges) to $X$ for every clause of length 3 in $\psi$, which yields $k$ vertices (edges) in total.
	Since $G-X$ has a topological ordering (with the order between vertices from different or-gadgets chosen arbitrarily), it has no directed cycles, so $X$ is a solution.

	Conversely, suppose $X$ is a set of size at most $k$ such that $G-X$ has no directed cycles.
	Let $\pi$ be a topological ordering of $G$.
	We claim $\pi|_{[n]}$---the ordering $\pi$ restricted to $[n]$---defines a permutation that satisfies $\psi$.
	Since each copy of the or-gadget requires deleting at least two non-terminal vertices (edges), by~Lemma~\ref{lem:inspectOrGadget}, 
	and $|X|\leq k$, $X$ has to contain exactly two non-terminal vertices (edges) from each or-gadget.
	In particular, $X$ contains no vertex from $[n]$ or edge with both endpoints in $[n]$, so none of the edges introduced for clauses of length 1 (between vertices of $[n]$) are deleted by $X$. 
	Hence $\pi|_{[n]}$ satisfies each such clause by the construction.
	Finally, for each clause $C$ of length 3 in $\psi$, consider the copy of the or-gadget introduced in $G$ for $C$.
	Since $X$ contains exactly two non-terminal vertices (edges) from this gadget and 
	$\pi$ yields a topological ordering of this gadget with $X$ removed, by Lemma~\ref{lem:inspectOrGadget} it follows that $\pi|_{[n]}$ must satisfy $C$.
\end{proof}

\section{Conclusions}\label{sect:conclusions}
We investigated the complexity of {\sc{Directed Feedback Vertex Set}} parameterized by the treewidth $t$ of the input digraph.
We proved that in general digraphs there is an algorithm with running time $2^{\Oh(t\log t)}\cdot n^{\Oh(1)}$, which is optimal under the Exponential Time Hypothesis.
On the other hand, in planar digraphs the running time can be improved to $2^{\Oh(t)}\cdot n^{\Oh(1)}$.

Our results do not provide any direct insight into the complexity of the classic parameterization: by the target solution size $k$.
We hope, however, that the combinatorial tools we used in the proof of Theorem~\ref{thm:main-algo} may be useful for improving the running time for {\sc{DFVS}} on planar digraphs, say to running time
$2^{\Oh(k)}\cdot n^{\Oh(1)}$, or for obtaining a somewhat incomparable running time $n^{\Oh(\sqrt{k})}$. Observe that there is a large gap between known results in this setting: 
while the classic reduction from {\sc{Vertex Cover}} on planar graphs gives a lower bound excluding
running time $2^{o(\sqrt{k})}\cdot n^{\Oh(1)}$ under ETH, no faster algorithm than $2^{\Oh(k\log k)}\cdot (n+m)$ from general digraphs~\cite{LokshtanovRS16} is known.

\bibliography{planar-dfvs}

\end{document}